\newcommand{\revision}[1]{\textcolor{black}{#1}}
\newcommand{\R}{\mathbb R}
\newcommand{\neigh}{\mathcal N}
\newcommand{\leader}{\textsc L}
\newcommand{\follower}{\textsc F}
\begin{document}

\title{An all-leader agent-based model for turning and flocking birds
\thanks{This work was carried out within the research project ``SMARTOUR: Intelligent Platform for Tourism" (No. SCN\_00166) funded by the Italian Ministry of University and Research with the Regional Development Fund of European Union (PON Research and Competitiveness 2007–2013).
\\
The authors also acknowledge the Italian Ministry of Instruction, University and Research for supporting this research with funds coming from the project entitled Innovative numerical methods for evolutionary partial differential equations and applications (PRIN Project 2017, No. 2017KKJP4X). 
\\
EC is member of the INdAM Research group GNCS and MM is member of the INdAM Research group GNAMPA.
}
}

\author{Emiliano Cristiani \and Marta Menci \and Marco Papi \and L\'eonard Brafman}

\authorrunning{E. Cristiani, M. Menci, M. Papi, L. Brafman} 

\institute{E. Cristiani \at
           Istituto per le Applicazioni del Calcolo, Consiglio Nazionale delle Ricerche, Rome, Italy \\
           \email{e.cristiani@iac.cnr.it} 
\and
           M. Menci \at
              Istituto per le Applicazioni del Calcolo, Consiglio Nazionale delle Ricerche, Rome, Italy \\
           \email{m.menci@iac.cnr.it} 
\and
           M. Papi \at
              Universit\`a Campus Bio-Medico di Roma, Rome, Italy \\
           \email{M.Papi@unicampus.it} 
\and
           L. Brafman \at
           \email{leonard.brafman@gmail.com} 
}


\maketitle

\begin{abstract}
Starting from recent experimental observations of starlings and jackdaws, we propose a minimal agent-based mathematical model for bird flocks based on a system of second-order delayed stochastic differential equations with discontinuous (both in space and time) right-hand side. 
The model is specifically designed to reproduce self-organized spontaneous sudden changes of direction, not caused by external stimuli like predator's attacks. 
The main novelty of the model is that every bird is a potential turn initiator, thus leadership is formed in a group of indistinguishable agents. 
We investigate some theoretical properties of the model and we show the numerical results. Biolo\-gical insights are also discussed.  
\keywords{starlings \and turning \and agent-based models \and delay differential equations \and leaders \and switching leaders \and self-organization}
\subclass{92D50 \and 92B05}
\end{abstract}

\section{Introduction}\label{sec:intro}
Collective motion of bird flocks is an impressive spectacle of nature which always fascinated humans; see, e.g., \cite{sumpter2006,vicsek2012} for an introduction to the field. 
\revision{In recent years}, several experimental studies unveiled most of the underlying mechanisms -- mostly from a physics perspective -- by tracking and analyzing trajectories of flying birds. We refer in particular to the experiments performed in Rome (Italy) on starlings \cite{attanasi2014,attanasi2015,ballerini2008b,ballerini2008a,cavagna2010,cavagna2013,procaccini2011} and to other experiments made within similar scopes \cite{ling2019,pomeroy1992}.

Among the cited references, the two papers \cite{attanasi2015,ling2019} focus on the study of coherent changes in the direction of travel of the whole group, and represent the experimental foundations of our investigation.
More precisely, they consider collective turns that have a localized spatial origin, starting from a few individuals of the flock, and are triggered spontaneously, i.e.\ are not caused by external stimuli like a predator's  attack. 
Indeed, during aerial display, flocks of starlings often keep changing their direction of motion even in the absence of predators or obstacles, see Figure \ref{fig:realstarlings}.
\begin{figure}[h!]
    \centering
    \includegraphics[width=0.45\textwidth]{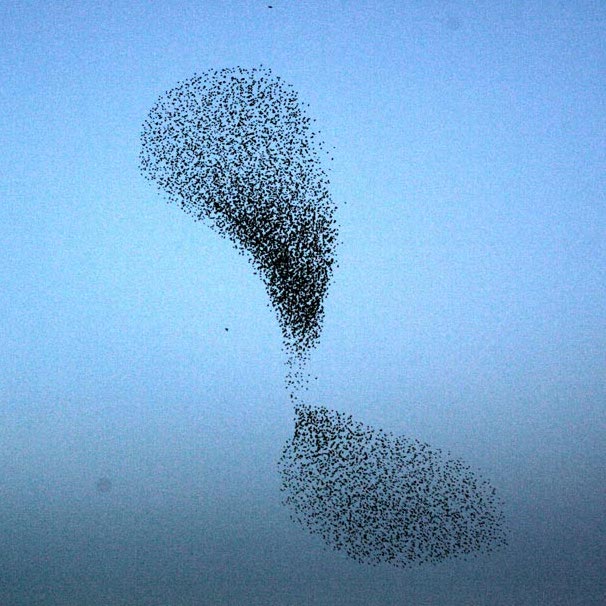}\hfill
    \includegraphics[width=0.524\textwidth]{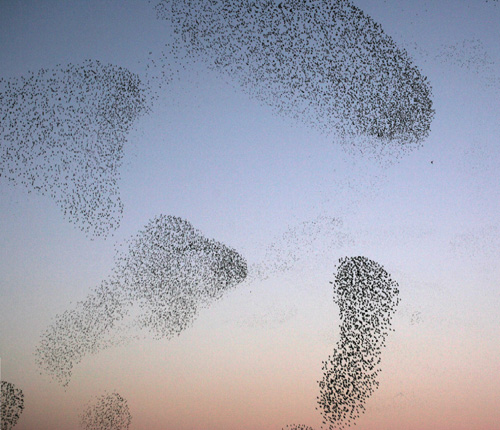}\\
    \caption{
    Two images of real starlings performing aerial displays. 
    The flock continuously changes direction of motion, stretches, compresses, and occasionally splits and then reunites.
%
\revision{
Courtesy: 
COBBS Lab, Institute for Complex Systems, National Research Council of Italy, Rome.
}
    }
    \label{fig:realstarlings}
\end{figure}
When a turn initiator triggers a change of direction, the information spreads along the flock as a wave, reaching rapidly all the group mates \cite{attanasi2015}. Nevertheless, in very large flocks, two or more changes of direction could take place at the same time. In fact, turning initiators can be so far from each other that turning waves propagate independently within the flock for a certain time. In that case, \revision{sometimes} the flock splits. 

\revision{Another important biological mechanism which serves as a foundation our work is that of \textit{switching leaders}, which has been observed in several animal groups, including flocks of birds \cite{butail2019detecting,chen2016switching,mwaffo2018detecting}. 
In large groups undergoing complex maneuvers, the group might experience an instantaneous change of its heading direction, resulting in a temporary change of group leadership.}

\medskip

Beside experimental literature, a number of numerical simulations were proposed to reproduce collective animal behavior. 
Seminal papers \cite{couzin2005,CS2007,vicsek1995} have shown that simple algorithms based on basic rules can generate an apparently complex self-organized (leaderless) collective motion. 
\revision{Paper \cite{bernardi2021} investigates, in general terms, how leaders can steer a group of animals.}

In this work, we put ourselves in the context of three-zone agent-based models, see, e.g., \cite{couzin2002,eftimie2018,sumpter2006,vicsek2012}, which are widely accepted in the community.
In particular, the simulator \textit{Stardisplay} \cite{hemelrijk2019,hemelrijk2011,hildenbrandt2010} aims at reproducing the aerial display of starlings both from a qualitative and quantitative point of view. That simulator is quite sophisticated (it has as many as 23 parameters), including also a simplified aerodynamics. Nevertheless, the model is not able to reproduce spontaneous turns not caused by roost attraction or by external stimuli.

\revision{Particularly worth mentioning is also} the inertial spin model \cite{caglioti2020,cavagna2015}. 
The model, although quite effective in reproducing the turning dynamics, does not consider the trigger of the turns, which is instead the main goal of this paper.

Other important features which are included in our model are already present in the literature: we mention in particular, \textit{time-delayed communications} among agents \cite{dong2020,haskovec2020,hemelrijk2011,hildenbrandt2010} and the well known \textit{topological interactions} \cite{ballerini2008a,dong2020,hemelrijk2011,hildenbrandt2010}, i.e.\ the fact that agents interact with a fixed number of group mates, regardless of their distance.

\paragraph{Paper contribution.}
In this paper we present a new agent-based mathematical model for bird flocks based on a system of second-order delayed stochastic differential equations with discontinuous (both in space and time) 
right-hand side, specifically designed to reproduce self-organized sudden changes of direction. We focus on spontaneous turns, i.e.\ not caused by roost attraction or predators' attacks, and on very large flocks.
We think that the proposed model is \textit{minimal}, in the sense that all the terms constituting the model are actually necessary for reproducing the turning effect. In particular,  topological interactions appear to be essential in order to reproduce desired turning behavior.

The model is based on two seemingly contradictory ideas: 
1) Turning effect is a genuine self-organized phenomenon arising in bird flocks, meaning that the new directions of motion arise from local interactions between group members. 
There is no permanent special member which lead the flock, all birds being indistinguishable.
2) Among birds, a sort of leadership actually exists, in the sense that one can find birds behaving differently from the others. 
The apparent contradiction is overcome by the fact that \textit{all birds occasionally try to change direction}, then all birds equally act as group controllers. 
If they are followed by others, they keep the new direction, otherwise they cease moving solo and return to the group. 
In this way indistinguishability is preserved because all birds in the flock have the same role.
Moreover, no bird is indispensable, meaning that removing any part of the flock does not affect the behavior of the remaining birds. 

From the analytical point of view, we show that the initial value problem associated to the model is well-posed. 
Starting from the classical approach in \cite{filippov1988book} (see also \cite{cristiani2011}), we extend well-established results for systems of differential equations with discontinuous right-hand side, in order to take into account the presence of the delay and the discontinuity with respect to time.

\paragraph{Paper organization.} 
In Section \ref{sec:model} we present the mathematical model.
In Section \ref{sec:theory} we study the theoretical properties of the model. 
In Section \ref{sec:numericalresults} we present the numerical results and finally 
in Section \ref{sec:conclusions} we discuss the results and sketch some conclusions.

\section{The model}\label{sec:model}
Let us consider a group of $n>1$ agents represented by dimensionless points having unit mass, moving in the three-dimensional space. We assume they are labeled univocally by their index $k=1,\ldots,n$. 
Let us denote by $X_k(t)=(X^1_k(t),X^2_k(t),X^3_k(t))\in\R^3$ and $V_k(t)=(V^1_k(t),V^2_k(t),V^3_k(t))\in\R^3$ the position and the velocity of the $k$-th agent at time \revision{$t \ge 0$}, respectively. 

\subsection{Leaders, followers, and their dynamics}
Agents can be either \textit{leaders} or \textit{followers}: the status of the agent $k$ at time $t$ is given by the function
\begin{equation}\label{status}
s_k(t)=:
\left\{
\begin{array}{ll}
\leader, & \ \ \mbox{if $k$ is a leader at time $t$,}\\
\follower, & \ \ \mbox{if $k$ is a follower at time $t$,}
\end{array}
\right.,\qquad k=1,\ldots,n.
\end{equation}
In our model, the role of the leaders is to \textit{initiate turns}, deviating from the common direction of motion.
In accordance to the typical biological findings, we assume that agents are indistinguishable, i.e.\ we reject the idea of the existence of a hierarchical structure within the flock. 
All agents are initially followers, then they have equal probability to become leaders. 
The change of status follower$\to$leader is ruled by a stochastic process, more precisely the switching time is a random variable with exponential distribution. In a discrete setting, considering a final time $T$ for the dynamics, we simply adopt a geometric distribution where a decision (keeping the follower status or becoming a leader) is taken every time step $\Delta t$ of the numerical simulation.
On the contrary, the change of status leader$\to$follower is deterministic and it is ruled by two model parameters: the \textit{persistence time} $\mathfrak p$ and the \textit{persistence distance} $\mathfrak d$. 
If either an agent has been leader for over $\mathfrak p$ time units or the distance from its nearest neighbor is over $\mathfrak d$ space units, then it returns to be a follower. 
Indeed, it was observed that turns are triggered by birds which deviate from the group direction and keep the new direction for a while \cite{attanasi2015} (see also \cite{toulet2015} in the context of sheep). 

It is also biologically sound that when an agent comes back to the follower status, the probability of becoming a leader again immediately thereafter is very low (cf.\ \cite[Box 2]{couzin2009} in the context of ants). 
Our model accounts for this fact introducing a third parameter, called \textit{\revision{refractory} time} $\mathfrak r$, which corresponds to the period of time a follower cannot change its status after being a leader. 

\medskip

The dynamics of the agents are described by the following system of delayed ordinary differential equations
\begin{equation}\label{eq:model}
\left\{
	\begin{array}{l}
	\dot X_k(t)=V_k(t),\\ [2mm]
	\dot V_k(t,s_k)=A_k(t-\delta,s_k(t)):=
	\left\{
	\begin{array}{ll}
	A_k^{\leader}(t-\delta), & \text{ if } s_k(t)=\leader, \\ [2mm]
	A_k^{\follower}(t-\delta), & \text{ if } s_k(t)=\follower, 
	\end{array}
	\right.
    \end{array}
\right.
\end{equation}
for $k=1,\ldots,n$ and $t\geq \delta$, where $\delta>0$ is the time delay (reaction time).
$A_k$ is the force field the agent $k$ is subject to, which also depends on its status. 

\subsection{Interactions}\label{subsec_interactions}

We consider a classical three-zone model, where repulsion, alignment and attraction forces among agents coexist. 
Repulsion is needed to avoid collisions, alignment to get flocking behavior, and attraction for group cohesion.

Following \cite{ballerini2008a}, we consider pure topological interactions, meaning that the dynamics of each agent $k$ are influenced by a fixed number $M$ of nearest mates only, regardless of their distance.
To do that, at any time $t$, and for each agent $k$, we order all the agents from the closest to the farthest w.r.t.\ the $k$-th one, i.e.\ we get the index set $\{1_k(t),\ldots,n_k(t)\}$ such that $\|X_{1_k(t)}-X_k(t)\|\leq \cdots \leq \|X_{n_k(t)}-X_k(t)\|$.
We denote by 
\begin{equation}\label{eq:defNeigh}
    \neigh(k,t;M):=\{1_k(t),\ldots,M_k(t)\}
\end{equation}
the set of the $M$ nearest neighbors of agent $k$ at time $t$, see Figure \ref{fig:neighbors}.
\begin{figure}[h!]
	\centering
	\begin{overpic}[width=0.95\textwidth]{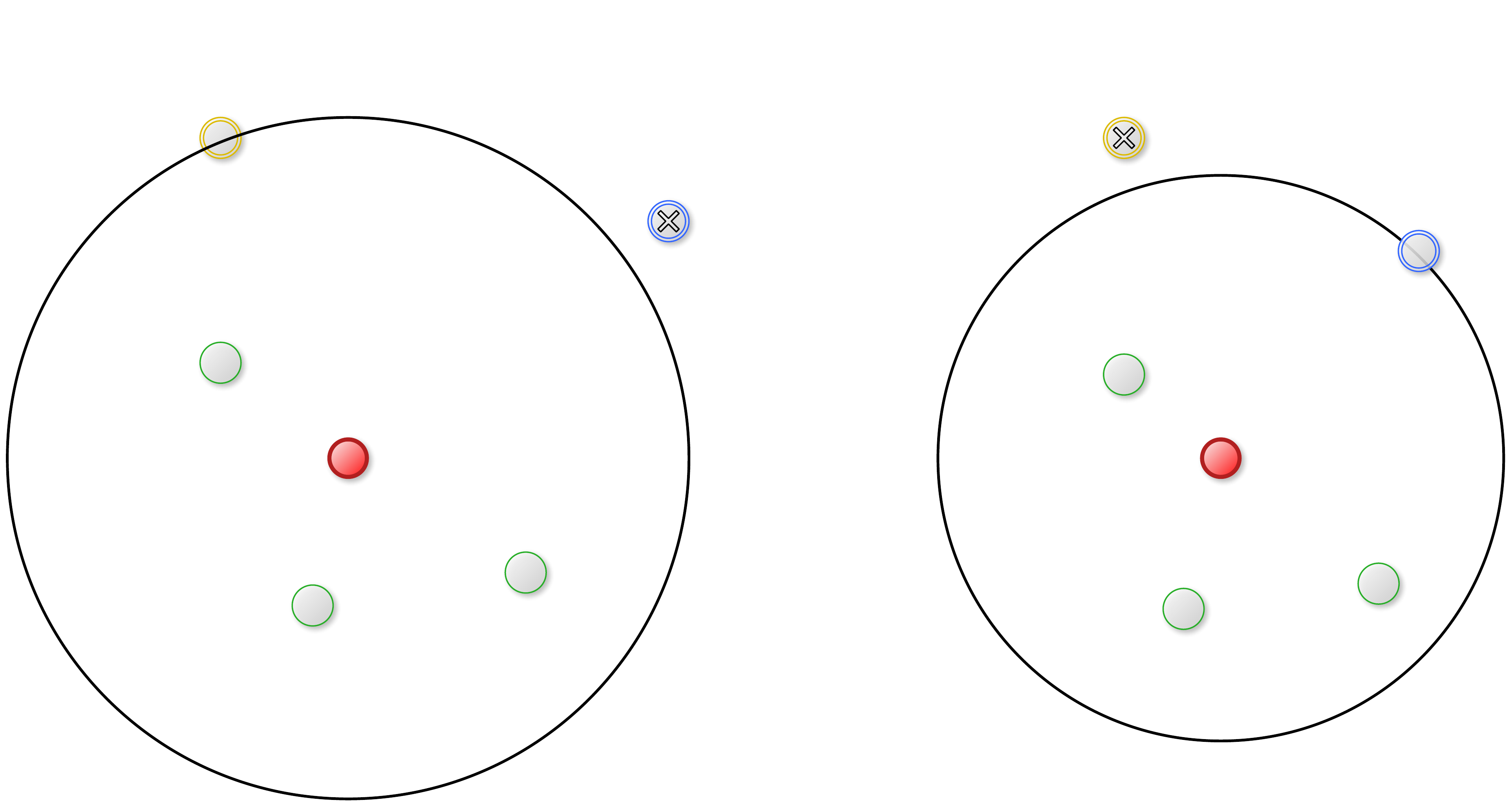}
	\put (-5,0) {\textbf{a.}} \put (60,0) {\textbf{b.}}
	\put (21,20) {$k$}  \put (79,20) {$k$}
	\put (12,40) {$M_k$} \put (89,34) {$M_k$}
	\put (43,34) {($M$+1)$_k$} \put (77,44) {($M$+1)$_k$}
	\put (-3,26) {\large $\mathcal N$} \put (58,26) {\large $\mathcal N$}
\end{overpic}
	\caption{Neighbors of agent $k$ with $M=4$. From \textbf{a.} to \textbf {b.} we observe the switch between the $M$-th and the ($M$+1)-th neighbor, with consequent redefinition of the set of neighbors $\mathcal N$ and the threshold distance $\|X_{M_k}-x_k\|$.}
	\label{fig:neighbors}
\end{figure}
\begin{remark}\label{rem:Neighambiguous} 
The definition of $\neigh(k,t;M)$ is ambiguous in the case the $M$-th and the ($M$+1)-th nearest neighbor are at the same distance, i.e.\ $\|X_{M_k}-X_k\|=\|X_{(M+1)_k}-X_k\|$.
To avoid any ambiguity and/or stochasticity in the definition of $\mathcal N$, in the case of two equidistant neighbor we select the agent with the lower index.
\end{remark}

We define the following social forces:
\begin{enumerate}
\item Repulsion

\begin{equation}\label{eq:repulsionforce}
A_k^{rep}(t):= -C^{rep}\sum_{j_k \in \neigh(k,t;M)} \frac{X_{j_k}(t)-X_k(t)}{\|X_{j_k}(t)-X_k(t)\|^2 +\varepsilon} 
\end{equation} 
\item Alignment 
\begin{equation}\label{eq:alignmentforce}
A_k^{ali}(t):=\displaystyle\frac{C^{ali}}{M}\displaystyle\sum_{j_k \in \neigh(k,t;M)} (V_{j_k}(t)-V_k(t))
\end{equation}
\item Attraction 
\begin{equation}\label{eq:attractionforce}
A_k^{att}(t):= C^{att}\sum_{j_k \in \neigh(k,t;M)} (X_{j_k}(t)-X_k(t))
\end{equation}
with $\varepsilon$, $C^{rep}$, $C^{ali}$, $C^{att}$ positive parameters. 
\end{enumerate}
 
We observe that attraction and repulsion forces are function of the distances among the agents. In particular, attraction force grows with the distance, whereas the repulsion force decreases. 
Parameter $\varepsilon$ \revision{is a small positive constant which} avoids degenerate repulsion forces and, from the modeling point of view, translates the fact that agents are not really dimensionless.

 
Finally, we set the dynamics of the agent $k$ as in \eqref{eq:model} with 
\begin{equation}\label{eq:completedynamics}
     A_k^{L}:=A_k^{rep}
     , \qquad 
     A_k^{F}:=A_k^{rep}+A_k^{ali}+A_k^{att}.
\end{equation}
It is quite natural assuming that repulsion force is exerted to every agent, either leader or follower, since all agents need to avoid collisions with other group members. 
Instead, only followers display cohesion and alignment attitudes, whereas leaders, willing to steer the flock, naturally leave the flock under the repulsion force.

We want to stress once again that in our model each agent can be a leader, regardless of its position within the flock. In accordance to the experimental literature \cite{attanasi2015}, we expect that all the turning attempts performed by leaders in the interior of the flock likely fail, since the strong repulsion forces coming from all directions oblige the leader to keep the relative position inside the flock.
Leaders at the boundary of the flock, instead, experience non-symmetric interactions with group mates, then they could likely succeed in steering the flock since the resulting repulsion force points outside the flock. 
This is especially true if, by chance, two or more agents at the same time become leader, they are close to each other, and their repulsion forces point approximately in the same direction outside the flock. 
Under these conditions, a sufficient critical turning force is created and it makes leaders' neighbors follow leaders, generating the desired cascade effect.  



\section{Analytical results}\label{sec:theory}
In this section we show that the initial value problem associated to system \eqref{eq:model} is well-posed. 
For simplicity, we assume that the switching times for the follower$\rightarrow$leader change of status are sampled \textit{a priori} and therefore can be assumed to be known in advance.
This reinterpretation drops the stochasticity of the model, which now becomes fully deterministic.

We first state a general result concerning existence and uniqueness of the solution to delay differential equations with discontinuous rhs. This result (Proposition \ref{prop:Filippov_extension} in the following) can be regarded as an extension of classical Filippov's results, see \cite[Chapter 1, Theorems 1-2]{filippov1988book}.
After that, we show that the specific model presented in Section \ref{sec:model} falls in this general theory, thus proving the well-posedness of the problem (Theorem \ref{teo_exuniq}).


\medskip

Let us consider the following system of differential equations with constant delay $\delta \in (0,T)$, 
$f: [0,T] \times \Omega \times \Omega \rightarrow \mathbb{R}^n$ with $\Omega \subset \mathbb{R}^n$ open
\begin{align}\label{model_delay}
\left\{
	\begin{array}{ll}
	\dot{y}(t)=f\left(t, y(t), y(t-\delta) \right), & t \in (\delta,\delta+T), \\ [2mm]
	y(s)= \phi_{0}	\in \Omega, &  s \in [0,\delta].
	\end{array}
\right.
\end{align}

We assume $f$ satisfies the \textit{Caratheodory conditions}:

\begin{itemize}
\item[A1)] The function $f\left(t, y, w \right)$ is continuous in $(y,w)$ for almost all $t$;
\item[A2)] The function $f\left(t, y, w \right)$ is measurable in $t$ for each $(y,w)$;
\item[A3)] $\left| f\left(t, y, w \right) \right| \le m(t)$, being $m(t) $ a summable function.
\end{itemize}

Moreover, we assume that
\begin{itemize}
\item[A4)] there exists a summable function $l(t)$ such that for any points $(t,y_1,w_1)$ and $(t,y_2,w_2)$ $\in [0,T] \times \Omega \times \Omega$ it holds
\begin{equation}\label{per_unicita}
\left| f\left(t, y_1, w_1 \right) - f\left(t, y_2, w_2 \right) \right| \le l(t) \left( \left|  y_1- y_2 \right| +  \left|  w_1- w_2 \right|\right).
\end{equation}
\end{itemize}
\begin{proposition}\label{prop:Filippov_extension}
Under assumptions A1)-A4), there exists at most one solution to the problem (\ref{model_delay}) in  $(\delta, \delta + T)$. 
\end{proposition}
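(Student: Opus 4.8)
The plan is to prove uniqueness by combining the classical \emph{method of steps} for delay equations with a Gronwall-type estimate carried out on each step. First I would recast problem (\ref{model_delay}) in integral form: under the Caratheodory conditions A1)--A3) any solution $y$ is absolutely continuous on $(\delta,\delta+T)$ and satisfies $\dot y = f(\cdot, y, y(\cdot-\delta))$ almost everywhere, hence it is equivalent to the Volterra integral equation
\[ y(t) = \phi_0 + \int_\delta^t f\left(\tau, y(\tau), y(\tau-\delta)\right)\, d\tau, \qquad t \in [\delta, \delta+T). \]
Suppose $y_1$ and $y_2$ are two solutions. By the prescribed history they coincide with $\phi_0$ on $[0,\delta]$, so the nonnegative function $u(t) := \left| y_1(t) - y_2(t) \right|$ vanishes identically on $[0,\delta]$.

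Next I would exploit the Lipschitz bound A4. Subtracting the two integral identities and applying (\ref{per_unicita}) yields, for $t \ge \delta$,
\[ u(t) \le \int_\delta^t l(\tau)\big( u(\tau) + u(\tau-\delta) \big)\, d\tau. \]
The point is that the delay decouples this estimate interval by interval. On the first step $t \in [\delta, 2\delta]$ the retarded argument satisfies $\tau - \delta \in [0,\delta]$, where $u \equiv 0$; the term $u(\tau-\delta)$ therefore drops out and the inequality reduces to $u(t) \le \int_\delta^t l(\tau) u(\tau)\, d\tau$. Since $l$ is summable and $u$ is continuous and nonnegative, the integral form of Gronwall's lemma forces $u \equiv 0$ on $[\delta, 2\delta]$.

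Finally I would iterate. Partition $[\delta, \delta+T]$ into the consecutive steps $[\delta + (j-1)\delta,\, \delta + j\delta]$, $j = 1, \ldots, N$ with $N = \lceil T/\delta \rceil$ (the last one truncated at $\delta+T$). Assuming inductively that $u \equiv 0$ up to $\delta + (j-1)\delta$, on the $j$-th step the retarded argument $\tau-\delta$ lies in the previous step, where $u$ already vanishes; the displayed inequality again collapses to $u(t) \le \int l(\tau) u(\tau)\, d\tau$ and Gronwall gives $u \equiv 0$ on the $j$-th step. After finitely many steps $u$ vanishes on all of $(\delta, \delta+T)$, i.e.\ $y_1 = y_2$, which is the assertion.

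The main obstacle I anticipate is not any single step but the bookkeeping that renders the delay harmless: one must check that A4 may legitimately be invoked along both solutions, which requires them to take values in $\Omega$ (guaranteed, since they are solutions), and that the merely \emph{summable} — rather than bounded — Lipschitz function $l$ still permits the Gronwall conclusion on each finite step. Both issues are settled by the integral (Caratheodory) version of Gronwall's inequality, so no regularity beyond absolute continuity of the solutions is needed, and the discontinuity of $f$ in $t$ plays no role once the estimate is written in integrated form.
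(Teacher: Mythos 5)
Your argument is correct for the statement as written, but it follows a genuinely different route from the paper's on the uniqueness part. The paper does not use the method of steps: setting $z=y_1-y_2$, it bounds $|z(\tau)|+|z(\tau-\delta)|\le 2\sup_{0\le s\le\tau}|z(s)|$, arrives at the single inequality $\sup_{0\le s\le t}|z(s)|\le 2\int_\delta^t l(\tau)\sup_{0\le s\le\tau}|z(s)|\,d\tau$, and concludes with one application of Gronwall to the running supremum. Your step-by-step induction reaches the same conclusion; it is slightly longer but makes explicit \emph{why} the delay is harmless --- on each step the retarded argument falls in a region where the two solutions already coincide, so the delayed term drops out entirely rather than being absorbed into a supremum --- and it only needs the standard integral Gronwall lemma for a continuous function against a summable kernel, with no manipulation of the supremum function. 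Both arguments are sound. One remark: the proposition literally claims only ``at most one solution'', which your proof fully establishes; the paper's appendix proof additionally \emph{constructs} a solution via the delayed Euler-type approximations $y_k(t)=\phi_0+\int_\delta^t f(\tau,y_k(\tau-h),y_k(\tau-h-\delta))\,d\tau$ and Ascoli--Arzel\`a, and that existence half is what is actually invoked at the end of the proof of Theorem \ref{teo_exuniq}. You are not obliged to supply it for the stated claim, but be aware that the paper relies on it downstream.
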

The proof of Proposition \ref{prop:Filippov_extension} is postponed in the Appendix.

\medskip 

Let us now introduce some notations which will be used in the following.
We define the norm of the whole system configuration $X=[X^\top_1,\ldots,X^\top_n]\in \mathbb{R}^{3\times n}$ as $||X||=\left[\sum_{k=1}^n ||X_k||^2\right]^{1/2}$
\revision{and we denote by $B_{r}(\zeta)$ the ball centered at $\zeta$ with radius $r>0$.}
%
We note that the set $\mathcal N(k,t;M)$ depends on $t$ only via the system state $X(t)$. Therefore, with a small abuse of notation, we will also refer to this set by $\mathcal N(k,X;M)$.


%
%

We denote by
%
\begin{align}\label{def:S}
\mathcal S:=\left\{
X\in \mathbb{R}^{3\times n}\ \text{ s.t.\ } ||X_{M_k}-X_k||= || X_{(M+1)_k} -X_k||\ 
\right. 
\nonumber \\ 
\left.
\text{ for some } k\in\{1,\ldots,n\}
\right\}
\end{align}
the set of \textit{switching configurations}, corresponding to the situation where some agent $k$ is equally distant from its $M$-th and ($M$+1)-th nearest neighbor. 
Note that the complement set $\mathcal S^c:=\mathbb{R}^{3\times n}\backslash \mathcal S$ is open and dense in $\mathbb{R}^{3\times n}$. The following remark is crucial for the rest of the theoretical investigation.
\begin{remark}\label{rem:Emi'scontribution} 
If a system configuration $X(t)\in\mathcal S^c$ for an interval of time $\Delta$, then the set $\mathcal N(k,X;M)$ of indices of the $M$ nearest neighbors of the agent $k$ remains identical for any $t\in\Delta$.
In fact, the set of indices of the $M$ nearest neighbors changes only if the $M$-th and the ($M$+1)-th nearest neighbors swap (while the ordering of the $M$ nearest neighbors could change, but this does not affect the set of indices). 
Most important, if the set $\mathcal N(k,X;M)$ does not change for any agent $k$, then no agent changes the flock mates it is interacting with, see \eqref{eq:repulsionforce}-\eqref{eq:alignmentforce}-\eqref{eq:attractionforce} and Figure \ref{fig:neighbors}.
\end{remark}



In order to apply Proposition \ref{prop:Filippov_extension}, it is useful to rewrite our dynamical system in a compact form. 
We denote by $Z=[Z_1,\ldots,Z_n]\in \mathbb{R}^{3\times 2\times n}$ the variable whose $k$-th component is $Z_k=[X^\top_k,V^\top_k]\in \mathbb{R}^{3 \times 2}$. 
Without loss of generality, in this section we assume $L=1$ and $F=0$ in (\ref{status}), and using (\ref{eq:completedynamics}), we rewrite $A_k$ as
\begin{eqnarray}\label{mp.12}
A_k(Z,s_k):=-C^{rep} \sum_{j_k\in\mathcal N (k,X;M)} r(X_{j_k}-X_k)+ \nonumber\\
+(1-s_k) \sum_{j_k\in\mathcal N (k,X;M)} \left[\frac{C^{ali}}{M}(V_{j_k}-V_k)+C^{att} (X_{j_k}-X_k)\right],
\end{eqnarray}
where $r:\mathbb{R}^3\rightarrow \mathbb{R}^3$, $r(\zeta):=\frac{\zeta}{\|\zeta\|^2+\varepsilon}$.

Thus, the model in (\ref{eq:model}) is equivalent to the following delayed system of differential equations: 
\begin{eqnarray}\label{mp.11}
\dot{Z}_k(t)=H_k(Z(t-\delta),s_k(t)),\qquad t\geq \delta,
\end{eqnarray}
where $$H_k(Z,s_k):=[V^\top_k,A^\top_k(Z,s_k)].$$ 


We consider the initial value problem associated to (\ref{mp.11}) with constant initial data given by $X_k(t)=X_k^0$, $V_k(t)= V_k^0$, for any $t\in [0,\delta]$.
 
\medskip 
 
From the analytical point of view, one of the main issue of our model is represented by the leader activation function $s_k(\cdot)$, which implies a discontinuous right-hand side in (\ref{mp.11}) with respect to $t$.
For that reason, we need to specify a suitable definition for a solution to our problem. 
To this end, we extend the notion of Filippov in \cite[Ch.\ 1]{filippov1988book}.
 

\begin{definition} 
A function $Z=Z(t)$ is a local solution to (\ref{mp.11}) satisfying the initial condition $Z^0\in \mathbb{R}^{3\times 2\times n}$ if there exists $T>0$ such that $Z$ is absolutely continuous on each closed interval $I\subset [\delta, \delta+T)$ and satisfies 
\begin{eqnarray}\label{mp.13}
Z_k(t)&=&Z_k^0+\int_\delta^t H_k(Z(\tau-\delta),s_k(\tau))\,d\tau,\quad \forall\;\; \revision{\delta}< t<\delta+T,\\
Z_k(t)&=&Z_k^0,\qquad \forall\;\; 0\leq t\leq \delta,
\end{eqnarray}
for any $k=1,\ldots,n$.
\end{definition}

We now state our main result, concerning the existence and uniqueness of a solution in the sense of the definition given above.

\begin{theorem}\label{teo_exuniq}
For every $X^0=(X_1^0,\ldots,X_n^0)\in {\mathcal S^c}$, $V^0=(V_1^0,\ldots,V_n^0)\in \mathbb{R}^{3\times n}$ the delayed initial value problem (\ref{mp.11}) with initial datum $Z^0$, with components $Z^0_k=[(X_k^0)^\top,(V_k^0)^\top]$, $k=1,\ldots,n$, admits at most one solution. 
\end{theorem}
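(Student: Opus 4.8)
The plan is to combine the Filippov-type uniqueness result of Proposition~\ref{prop:Filippov_extension} with the classical method of steps for delay equations, using Remark~\ref{rem:Emi'scontribution} to neutralise the state-discontinuity produced by the topological neighbourhoods. First I would recast \eqref{mp.11} in the abstract form \eqref{model_delay} by setting $y=Z$, $w=Z(\cdot-\delta)$ and $f(t,Z,W):=\big[H_1(W,s_1(t)),\ldots,H_n(W,s_n(t))\big]$. The structural observation that drives everything is that this $f$ does not depend on the current state $Z$ at all, and depends on $t$ only through the \emph{a priori} sampled status functions $s_k(\cdot)$, which are piecewise constant with finitely many jumps on $[0,T]$.

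The first technical task is to verify the Carath\'eodory hypotheses A1)--A4) for $f$ on the region where the delayed configuration lies in $\mathcal{S}^c$. Here Remark~\ref{rem:Emi'scontribution} is decisive: as long as $W$ avoids $\mathcal{S}$, each index set $\neigh(k,W;M)$ is locally constant, so $A_k(W,s_k)$ reduces to a \emph{fixed} finite sum of the field $r(\zeta)=\zeta/(\|\zeta\|^2+\varepsilon)$ together with linear alignment and attraction terms. Since $\varepsilon>0$, the map $r$ is globally bounded and globally Lipschitz; this yields A3) and, together with the linearity of the remaining terms, the Lipschitz estimate \eqref{per_unicita} required by A4), with a constant that may be taken uniform because only finitely many neighbour sets of cardinality $M$ can occur. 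Measurability in $t$ (A2)) and continuity in $(Z,W)$ for almost every $t$ (A1)) then follow from the piecewise-constant nature of the $s_k(\cdot)$, the only discontinuities in $t$ occurring at the finitely many pre-sampled switching instants.

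With A1)--A4) available on $\mathcal{S}^c$, I would close the argument by the method of steps on the intervals $[m\delta,(m+1)\delta]$, reasoning directly from the integral characterisation \eqref{mp.13}. On $[0,\delta]$ every solution equals the constant datum $Z^0$, and the hypothesis $X^0\in\mathcal{S}^c$ makes the initial neighbour structure unambiguous. Assuming now that two solutions coincide on $[0,m\delta]$, for $t\in[m\delta,(m+1)\delta]$ the delayed argument $Z(\tau-\delta)$ ranges over $[(m-1)\delta,m\delta]\subset[0,m\delta]$ and is hence common to both; because the tie-breaking convention of Remark~\ref{rem:Neighambiguous} renders each $H_k$ single-valued and $r$ is bounded, the integrand in \eqref{mp.13} is one and the same bounded measurable function of $\tau$ for the two solutions, which therefore agree again on $[m\delta,(m+1)\delta]$. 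On each sub-interval where the delayed curve stays in $\mathcal{S}^c$ this is precisely the uniqueness granted by Proposition~\ref{prop:Filippov_extension}; the status switches and the crossings of $\mathcal{S}$ enter only through the shared integrand and cannot break the identity. Finitely many steps of length $\delta$ cover $(\delta,\delta+T)$, completing the induction.

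The main obstacle is the \emph{double} discontinuity of the right-hand side: in time, through the leader/follower switching, and in state, through the swap of the $M$-th and $(M+1)$-th neighbour across $\mathcal{S}$. The temporal jumps are harmless, since they occur at the finitely many pre-sampled instants and leave $f$ measurable in $t$; the spatial jumps are the genuine difficulty, and they are disarmed exactly by Remark~\ref{rem:Emi'scontribution}, which confines each integration step to a region of constant neighbour sets where the Lipschitz machinery behind Proposition~\ref{prop:Filippov_extension} is valid, while the convention of Remark~\ref{rem:Neighambiguous} keeps $H_k$ single-valued on the crossing set so that it contributes identically to the two candidate solutions.
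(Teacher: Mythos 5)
Your proof is correct, but it follows a genuinely different route from the paper's. The paper argues \emph{locally}: since $X^0\in\mathcal S^c$, Remark~\ref{rem:Emi'scontribution} freezes the neighbour sets on a ball $B_{\nu_0}(X^0)$, and on the corresponding domain $\Omega^0$ the map $(Z,t)\mapsto H_k(Z,s_k(t))$ is shown to satisfy A1)--A4) (using the global Lipschitz bound $\|r(\zeta)-r(\zeta')\|\le\|\zeta-\zeta'\|/\varepsilon$ and the bound $\|r\|\le 1/(2\sqrt{\varepsilon})$), so that Proposition~\ref{prop:Filippov_extension} delivers a solution, unique on some $[\delta,\delta+T)$ whose length may depend on $\nu_0,\mu_0$. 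Your key structural observation --- that the right-hand side of \eqref{mp.11} depends \emph{only} on the delayed state $Z(t-\delta)$ and on the pre-sampled $s_k(t)$, never on $Z(t)$ itself --- is not exploited in the paper, and it makes your method-of-steps argument both more elementary and stronger for the uniqueness claim: on $[m\delta,(m+1)\delta]$ the integrand in \eqref{mp.13} is a single, already-determined measurable bounded function of $\tau$, identical for any two candidate solutions, so they coincide without any Lipschitz machinery and without requiring the trajectory to remain in the frozen-neighbour ball; this gives uniqueness on the whole interval in finitely many steps, whereas the paper's conclusion is local. One caveat: your claim that A4) holds on $\mathcal S^c$ with a constant that is ``uniform because only finitely many neighbour sets can occur'' is not right as stated --- $A_k$ jumps across $\mathcal S$, so two nearby configurations with different neighbour sets violate \eqref{per_unicita} for any finite $l(t)$; the Lipschitz estimate is only valid on each region of constant neighbour structure, which is exactly why the paper localises to $\Omega^0$. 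This imprecision does not damage your proof, because your closing argument never actually invokes A4): it rests solely on the identity of the integrands and on the single-valuedness of $H_k$ guaranteed by the tie-breaking convention of Remark~\ref{rem:Neighambiguous}.
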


\begin{proof}
In order to prove the result, we show that $(Z,t)\mapsto H_k(Z,s_k(t))$ satisfies the assumptions of Proposition \ref{prop:Filippov_extension}. To this end, it suffices to study the properties of $(Z,t)\mapsto A_k(Z,s_k(t))$. 
Since $X^0\in {\mathcal S^c}$, there exists $\nu_0>0$ such that $\mathcal{N}(k,X;M)=
\mathcal{N}(k,X^\prime;M)=
\mathcal{N}(k,X^0;M)=:
\bar{\mathcal N}(k,\nu_0;M)$ 
for any $X$, $X^\prime$ such that $||X-X^0||<\nu_0$ and  $||X^\prime-X^0||<\nu_0$, see Remark \ref{rem:Emi'scontribution}. 
%
Let us denote by $\ell$ the generic index belonging to $\bar{\mathcal N}(k,\nu_0;M)$.
Therefore, for any $k$ we get the following inequalities:
\begin{eqnarray}\label{mp.14}
||(X_\ell-X_k)-(X^\prime_\ell-X^\prime_k)||\leq  ||X_\ell-X^\prime_\ell||+||X_k-X^\prime_k||
\end{eqnarray}
and, similarly, for any $V, V^\prime\in \mathbb{R}^{3\times n}$,
\begin{eqnarray}\label{mp.15}
||(V_\ell-V_k)-(V^\prime_\ell-V^\prime_k)||\leq  ||V_\ell-V^\prime_\ell||+||V_k-V^\prime_k||.
\end{eqnarray}

For each $\mu_0>0$ we consider the following open domain:
\begin{eqnarray}\label{mp.16}
\Omega^0 :=\!\!\left\{Z=[X^\top,V^\top]\in \mathbb{R}^{3\times 2\times n}\;\Big{|}\; \mbox{ $\begin{array}{c} 
X=(X_1,\ldots,X_n)\in B_{\nu_0}(X^0)\\
\!\!\!\!V=(V_1,\ldots,V_n)\,\,\in B_{\mu_0}(0)\ \ 
\end{array}$}\right\}.
\end{eqnarray}
Thus, we show that the Caratheodory conditions A1)-A3) are satisfied by $A_k(Z,s_k(t))$, in the open domain $\Omega^0$.

Since $r$ is continuous and satisfies $||r(\zeta)-r(\zeta^\prime)||\leq ||\zeta-\zeta^\prime||/\varepsilon$, for all $\zeta$, $\zeta^\prime$ in $\mathbb{R}^3$, by (\ref{mp.14})-(\ref{mp.15}), we get the following inequality:
\begin{eqnarray}\label{mp.17}
&&||A_k(Z,s_k)-A_k(Z^\prime,s_k)||\leq \frac{2M}{\varepsilon} C^{rep} || X-X^\prime ||\nonumber\\
&&+2|1-s_k| C^{ali} ||V-V^\prime||+2M |1-s_k| C^{att} ||X-X^\prime||\nonumber\\
&&\leq \left[
\frac{2M }{\varepsilon}C^{rep}+2 \left(C^{ali} +MC^{att}\right) |1-s_k|\right]||Z-Z^\prime||,
\end{eqnarray}
for any couple $Z$, $Z^\prime$ $\in \Omega^0$. 
Hence $A_k(Z,s_k(t))$ is continuous in $Z\in \Omega^0$, for all $t$. Since $s_k\mapsto A_k(Z,s_k)$ is a linear function, it is clearly continuous in $s_k$, for each $Z$, implying that $t\mapsto A_k(Z,s_k(t))$ is measurable in $t$, for each $Z$. 
The Caratheodory conditions A1)-A2) are satisfied.

Moreover, by the definition of $A_k$ in (\ref{mp.12}) and taking the maximum of the function $r$, we obtain
\begin{equation}\label{mp.18}
||A_k(Z,s)||\leq \frac{C^{rep} M}{2\sqrt{\varepsilon}} + 2|1-s|(C^{ali}\mu_0+M C^{att}\nu_0)
\end{equation}
for any $Z\in \Omega^0$, $s\in \mathbb{R}$. Since $s_k(t)\in \{0,1\}$, for each $t$, we deduce that the condition A3) holds true. 
Moreover, by inequality (\ref{mp.17}), 
we deduce that $(t,Z)\mapsto A_k(Z,s_k(t))$ satisfies also condition A4), with
$l(t):=2M C^{rep}/\varepsilon+2 (C^{ali} +MC^{att}) (1-s_k(t))$ in (\ref{per_unicita}). 
Hence, Proposition (\ref{prop:Filippov_extension}) yields the existence and the uniqueness of a local solution. 
\qed
\end{proof}
\noindent
\begin{remark} We observe that the solution to the problem (\ref{mp.13}) is defined in a suitable interval of time $[\delta,\delta+T)$, where $T>0$ may depend on the constants $\nu_0$, $\mu_0$.
\end{remark}

\section{Numerical results}\label{sec:numericalresults} 
In this section we present the results of some simulations carried out by means of the model \eqref{eq:model}-\eqref{eq:completedynamics}.
Contrary to the simulator \textit{Stardisplay} \cite{hemelrijk2019,hemelrijk2011,hildenbrandt2010}, we do not calibrate parameters on the basis of measured values. We prefer to focus on qualitative, rather than quantitative, description of the flock behavior, with special attention to the turning behavior. 
Parameters are normalized with respect to the characteristic length of a bird and a fictitious time unit, both set to 1. 
Parameters used in the simulations are: 
$M=7$, $C^{rep}=2.5$, $C^{ali}=3$, $C^{att}=0.01$, $\delta=0.1$, $\frak p=700$, $\frak d=20$, $\frak r=800$.
                                                                                                                                                
At initial time $t=0$, agents are \revision {uniformly} random distributed in a cube (a square in the case of the 2D test) of side \revision{200}, with null velocity.
For the numerical approximation, we use a standard explicit Euler scheme with time step $\Delta t=0.1$.
The transition follower$\to$leader is possible every time step and the probability of the transition (Bernoulli trial) is $2\times 10^{-4}$.

%
%
%

\subsection{A sample 2D flock}
We start our numerical analysis from a sample 2D test with 200 agents. 
\revision{We would like to stress that the reduction to a bi-dimensional test is particularly convenient for presenting the main features of the model, and in particular the effect of the leaders. Nevertheless, our final goal is to obtain the most realistic results possible in the 3D case, then model parameters are chosen to this end.}

Figure \ref{test:baricentro2D} shows the trajectory of the barycenter of the flock. We compare the case of the dynamics without (\revision{Figure \ref{test:baricentro2D}\textbf a}) and with (\revision{Figure \ref{test:baricentro2D}\textbf b}) leaders, \revision{starting with the same initial condition}. 
In the former case no agent becomes leader, then no turn is initiated. 
The flock starts moving under the three main social forces \eqref{eq:repulsionforce}-\eqref{eq:alignmentforce}-\eqref{eq:attractionforce} only and, after a transient, the flock finds a consensus, i.e.\ all agents reach the same velocity (magnitude and direction). 
On the contrary, if a sufficient number of leaders are present, the flock cannot reach the consensus because the equilibrium is continuously broken by the changes of direction. The barycenter's trajectory is less regular and never stabilizes, and several turnings are observable.
\begin{figure}[h!]
    \centering
    \textbf a.\includegraphics[width=0.47\textwidth]{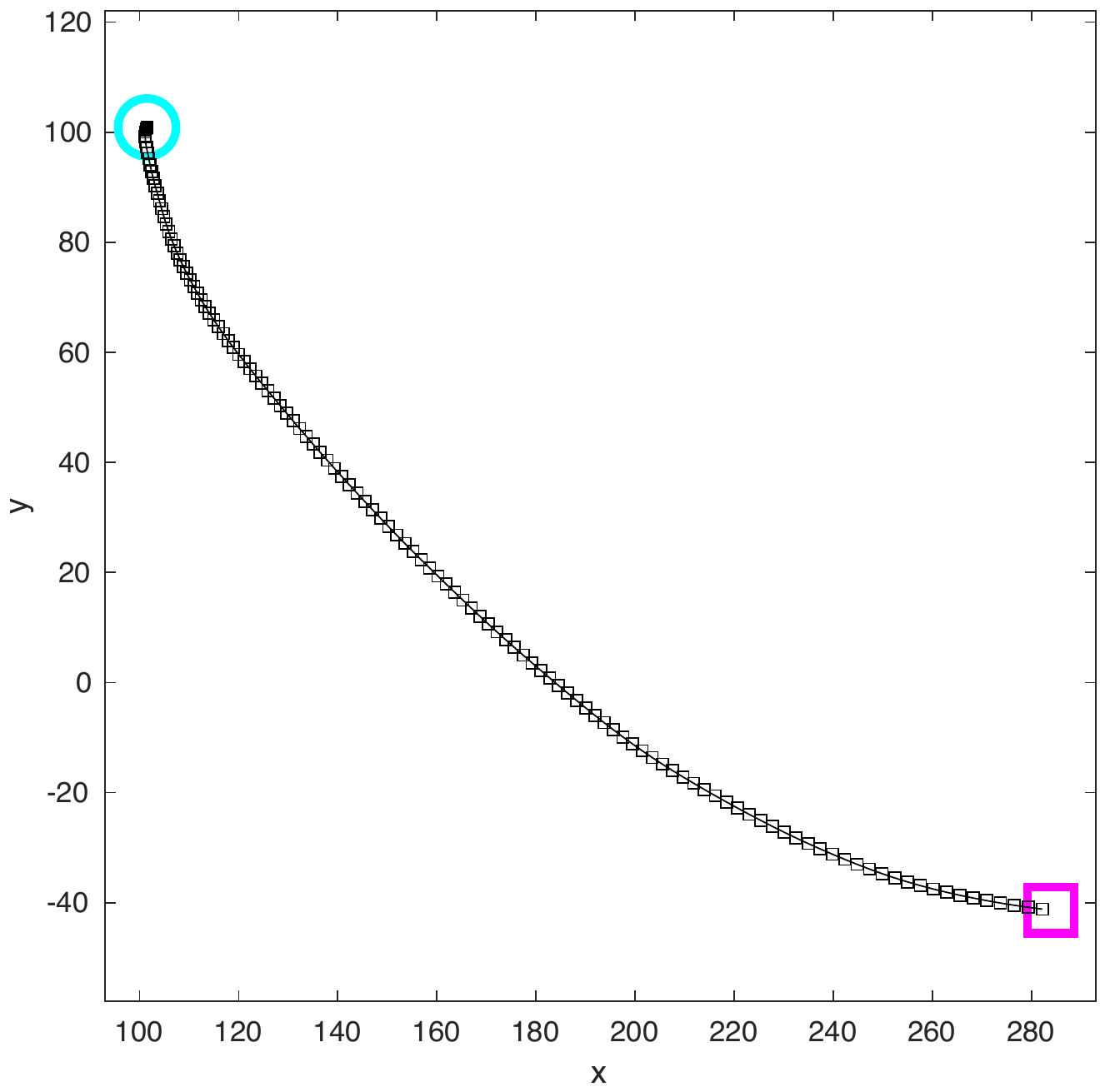}\hfill
    \textbf b.\includegraphics[width=0.47\textwidth]{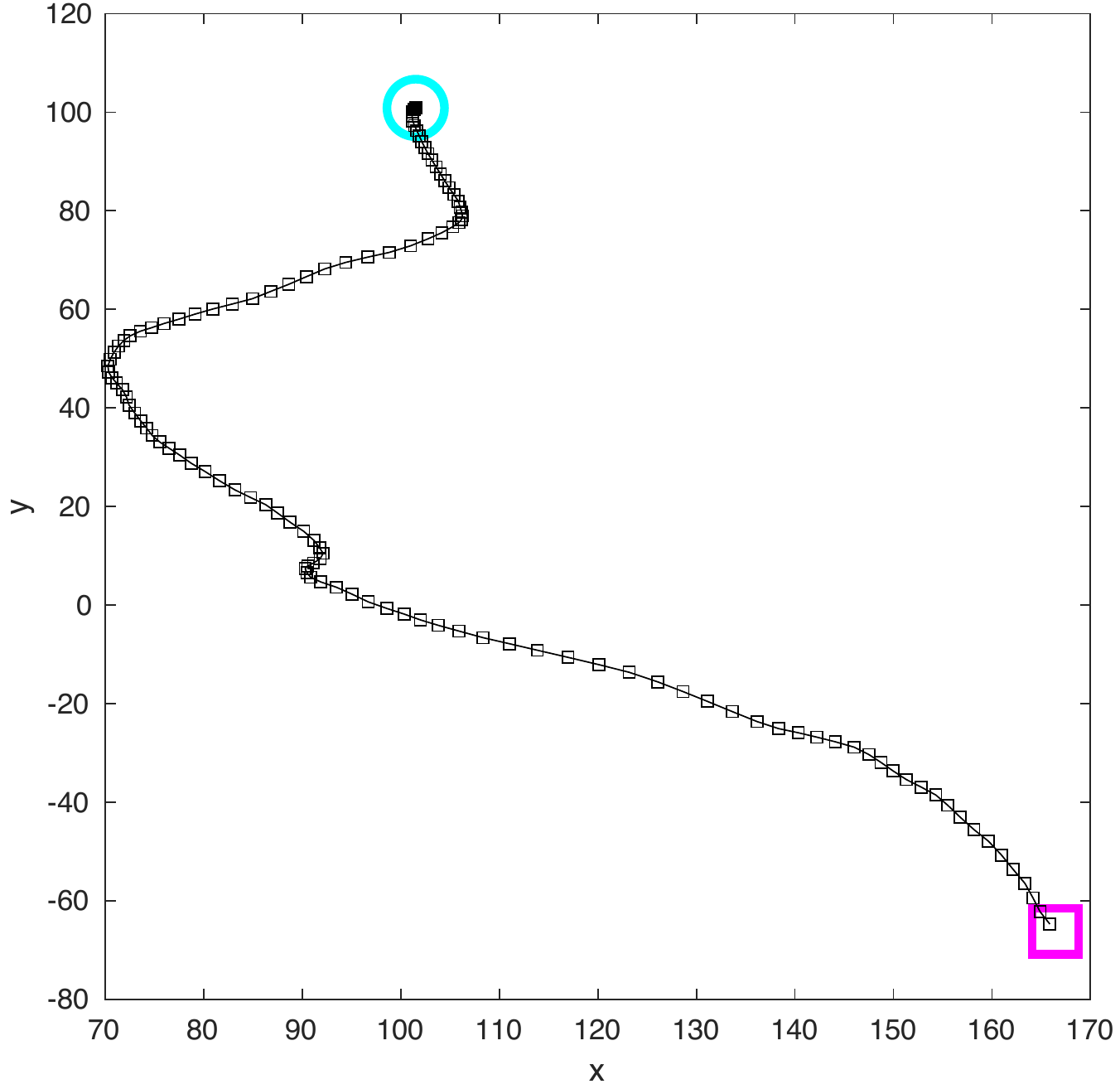}
    \caption{Test 2D: trajectory of the barycenter of the flock without (\textbf a.) and with (\textbf b.) leaders. \revision{Cyan circle and magenta square denote, respectively, the initial and final positions of flock barycenter.}}
    \label{test:baricentro2D} 
\end{figure}

\newpage
\revision{Figure \ref{test:mean_std} shows the time evolution of the velocities of the group members, in terms of both magnitude and direction. 
More precisely we plot, at each time $t$, the mean and the standard deviation of the norms of the velocities of the agents $\{ \|V_k(t)\| \}_{k=1,\ldots,n}$
for the scenario without (Figure \ref{test:mean_std}\textbf a) and with leaders (Figure \ref{test:mean_std}\textbf b).
Similarly, Figure \ref{test:mean_std}\textbf c and \ref{test:mean_std}\textbf d show, for each time $t$, the mean and the standard deviation of the angles $\{\theta_k(t)\}_{k=1,\ldots,n}$, 
where $\theta_k(t)$ is defined as the angle between the horizontal direction and $V_k(t)$.}

\revision{As already observed, if no agents become leaders, the flock reaches a consensus state, hence a common velocity (see Figure \ref{test:mean_std}\textbf a for magnitude and Figure \ref{test:mean_std}\textbf c for direction).
%
On the contrary, the presence of leaders involves frequent and sudden changes, both in the magnitude (Figure \ref{test:mean_std}\textbf b) and in the direction (Figure \ref{test:mean_std}\textbf d) of the velocities.
}
\begin{figure}[h!]
    \centering
    \textbf a.\includegraphics[width=0.47\textwidth]{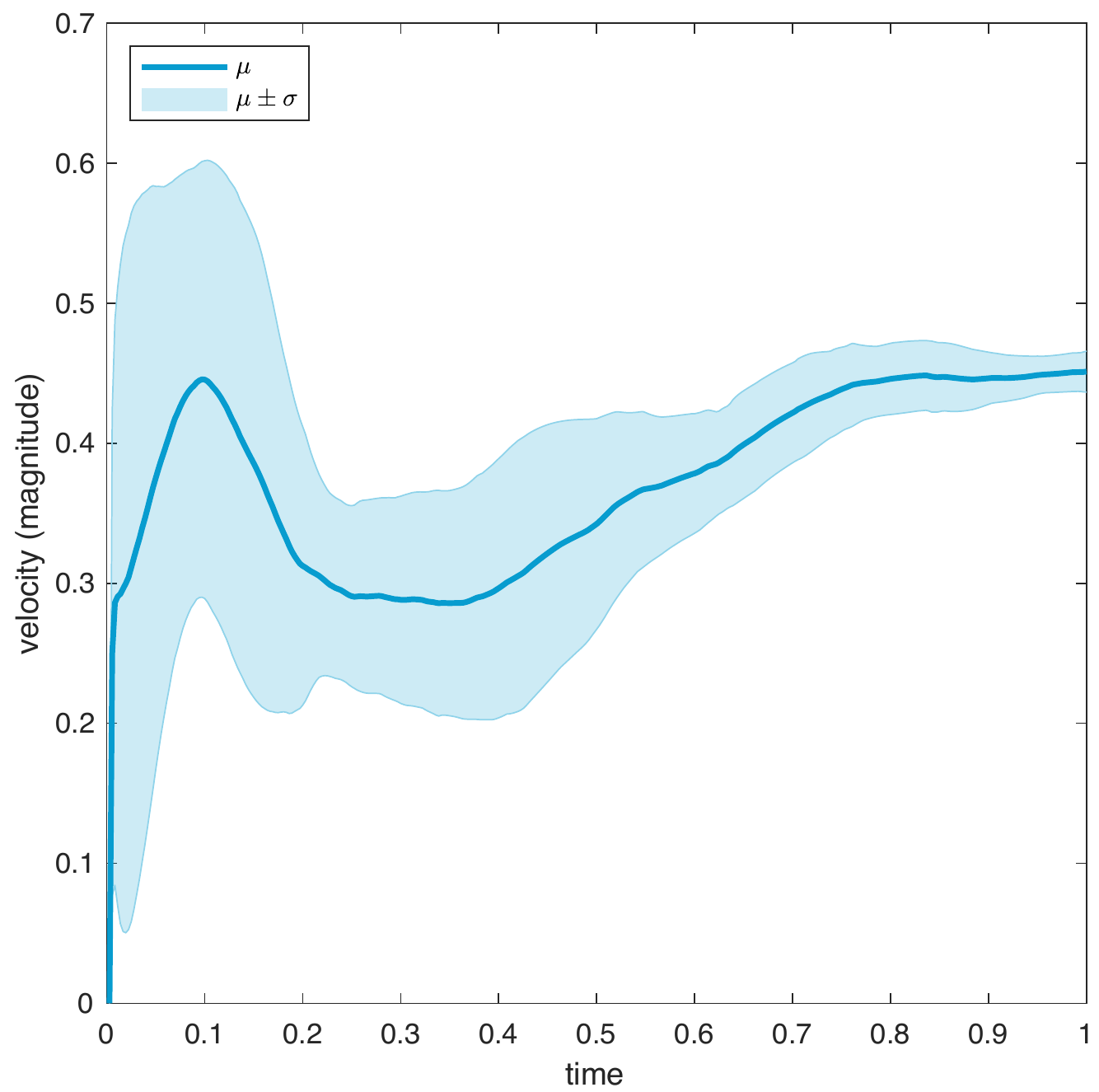}\hfill
    \textbf b.\includegraphics[width=0.47\textwidth]{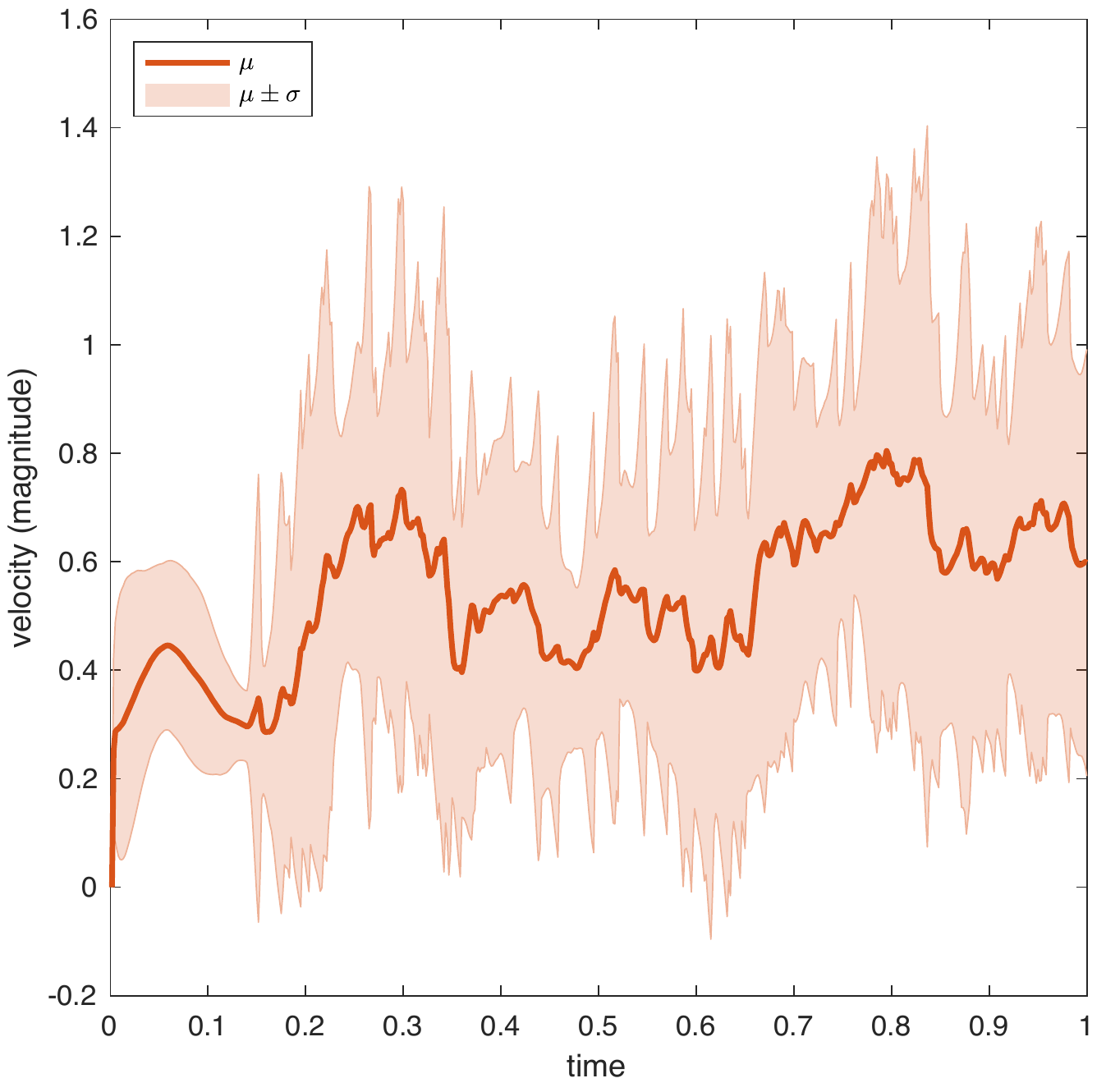}\\
    \textbf c.\includegraphics[width=0.47\textwidth]{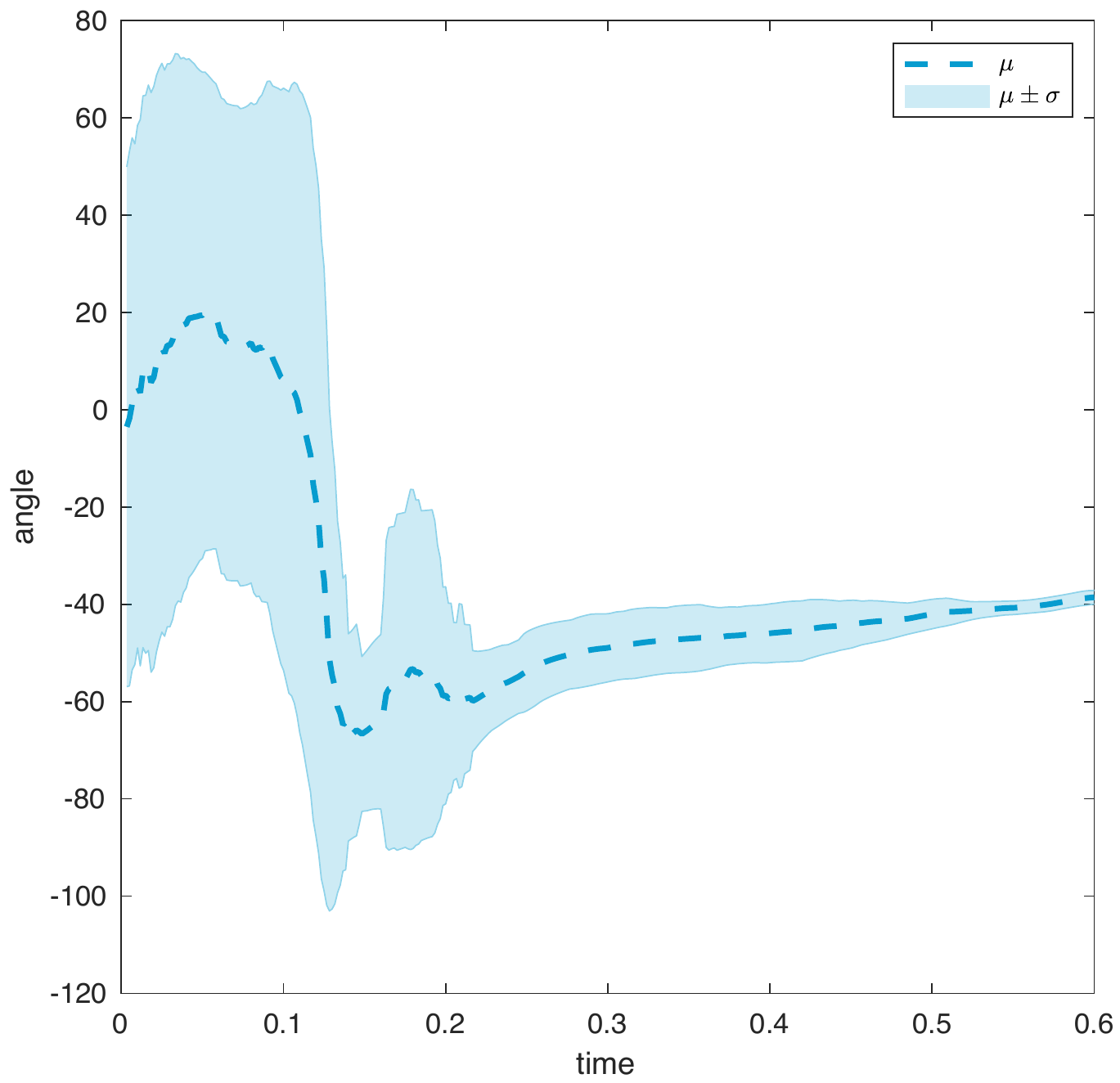}\hfill
    \textbf d.\includegraphics[width=0.47\textwidth]{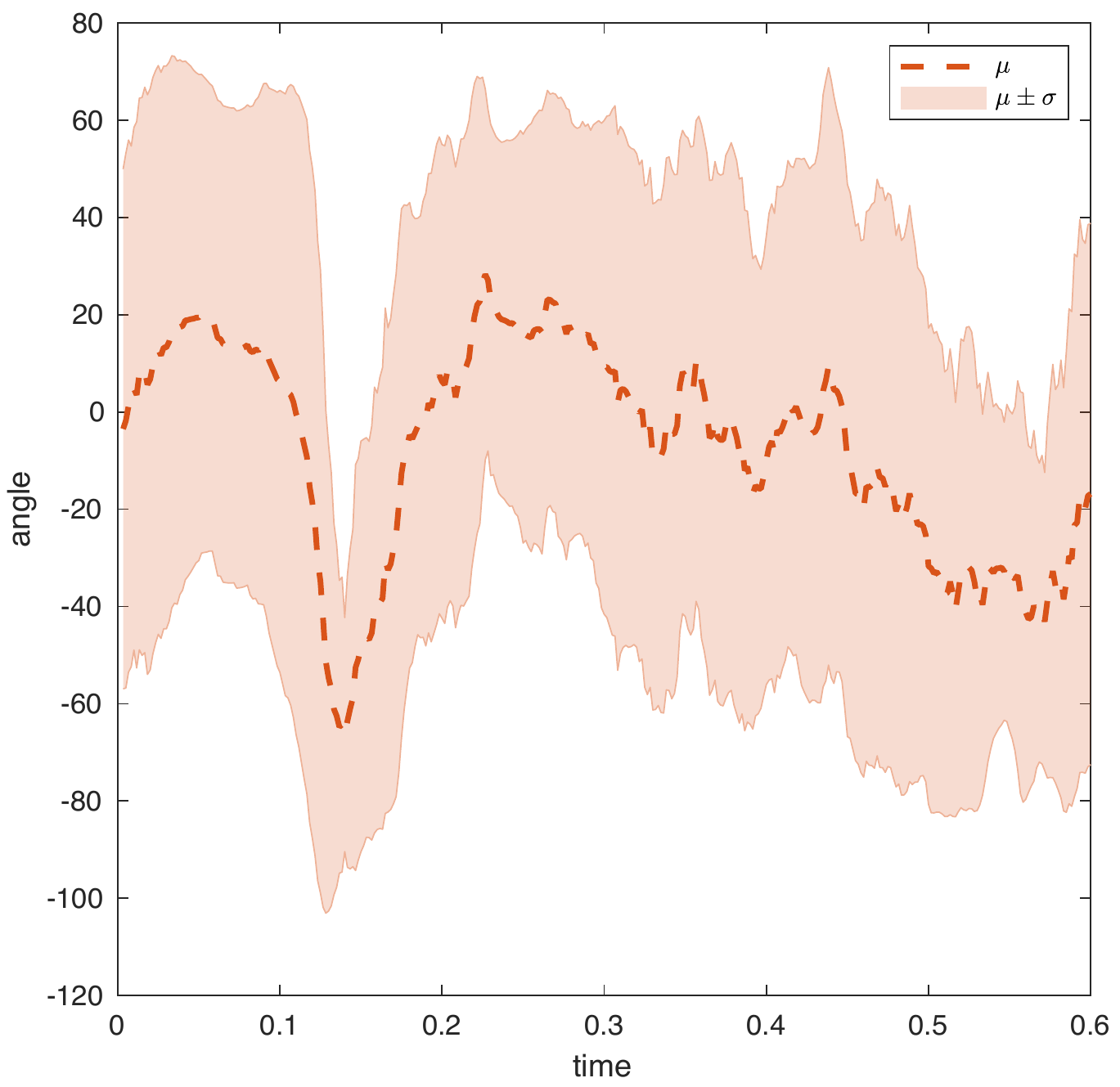}
    \caption{\revision{Test 2D:  Plot of mean and standard deviation of the modulus of the velocities (\textbf a. \& \textbf b.) and the angles between the velocities and the horizontal direction (\textbf c. \& \textbf d.), in the case without leaders (\textbf a. \& \textbf c. ) and with leaders (\textbf b. \& \textbf d.). 
    The value of the angles is expressed in degrees.}}
    \label{test:mean_std} 
\end{figure}

\clearpage
Figure \ref{test:elongazione2D} shows the horizontal and vertical elongations of the flock, respectively defined by
$$
e_h(t):=\max_k\{X^1_k(t)\}-\min_k\{X^1_k(t)\},\qquad
e_v(t):=\max_k\{X^2_k(t)\}-\min_k\{X^2_k(t)\}.
$$
As before, we compare the case of the dynamics without (\revision{ Figure \ref{test:elongazione2D}\textbf a}) and with (\revision{ Figure \ref{test:elongazione2D}\textbf b}) leaders. It is evident that in presence of leaders the flock continuously stretches and elongates in all directions.
\begin{figure}[h!]
    \centering
    \textbf a.\includegraphics[width=0.47\textwidth]{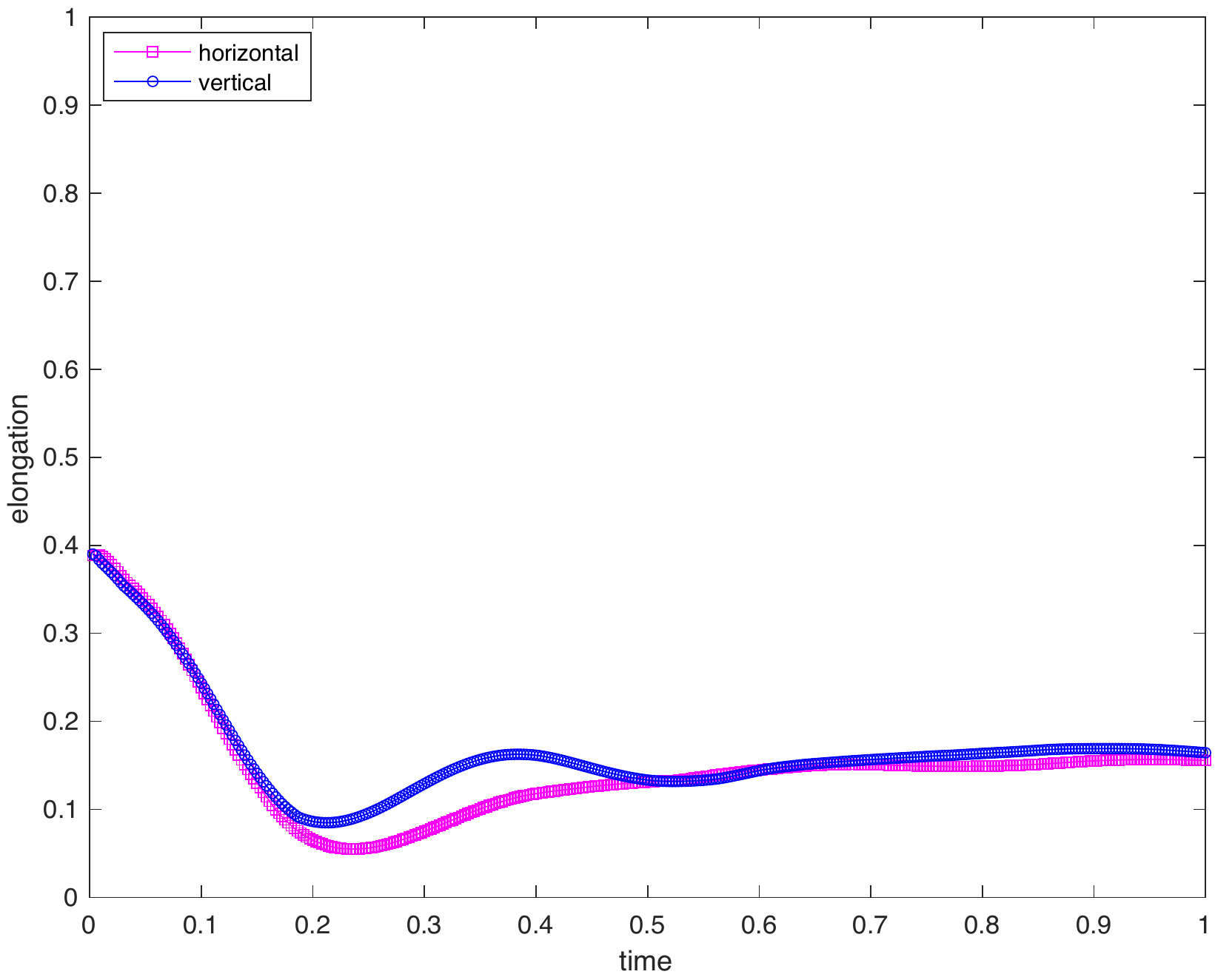}\hfill
    \textbf b.\includegraphics[width=0.47\textwidth]{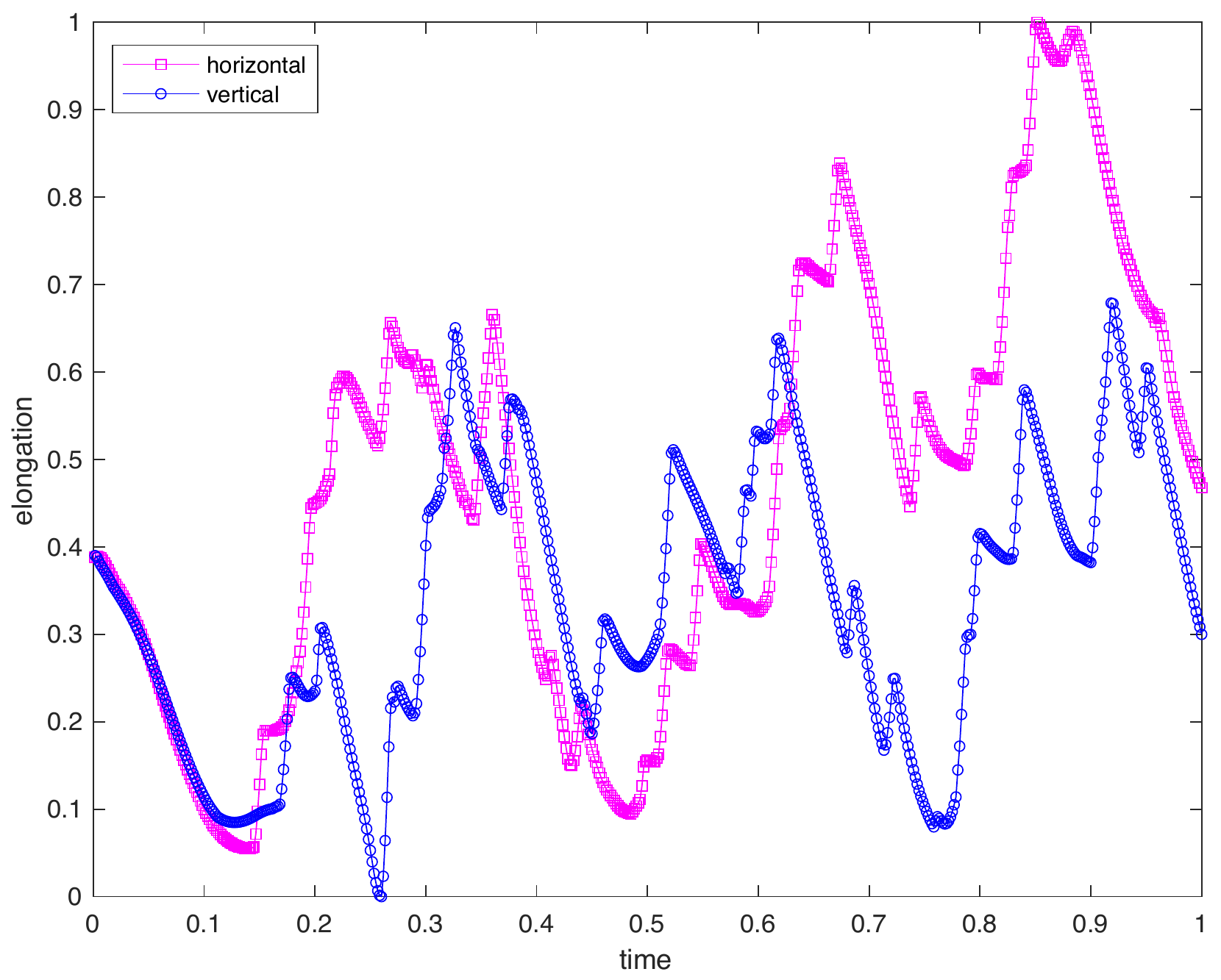}
    \caption{Test 2D: horizontal and vertical elongations of the flock without (\textbf a.) and with (\textbf b.) leaders. Elongations are measured in percentage with respect to the minimal and maximal elongations reached during simulation \revision{with leaders}.}
    \label{test:elongazione2D} 
\end{figure}
%

\newpage
\revision{Figure \ref{test:new_test2D} shows two screenshots of the numerical simulations performed without and with leaders.
Until the first follower$\to$leader transition (Figure \ref{test:new_test2D}\textbf b), the flock behaves as in the case without leaders (Figure \ref{test:new_test2D}\textbf a).
The dynamics begin to differ as leaders appear, confirming the results about the flock elongation: without leaders, the flock finds a consensus (Figure \ref{test:new_test2D}\textbf c). 
Conversely, with leaders the flock continuously stretches in different directions (Figure \ref{test:new_test2D}\textbf d). 
Figure \ref{test:new_test2D}\textbf d corresponds to the flock configuration at the maximal horizontal elongation (see Figure \ref{test:elongazione2D}\textbf b). }
\begin{figure}[h!]
    \centering
    \textbf a.\includegraphics[width=0.47\textwidth]{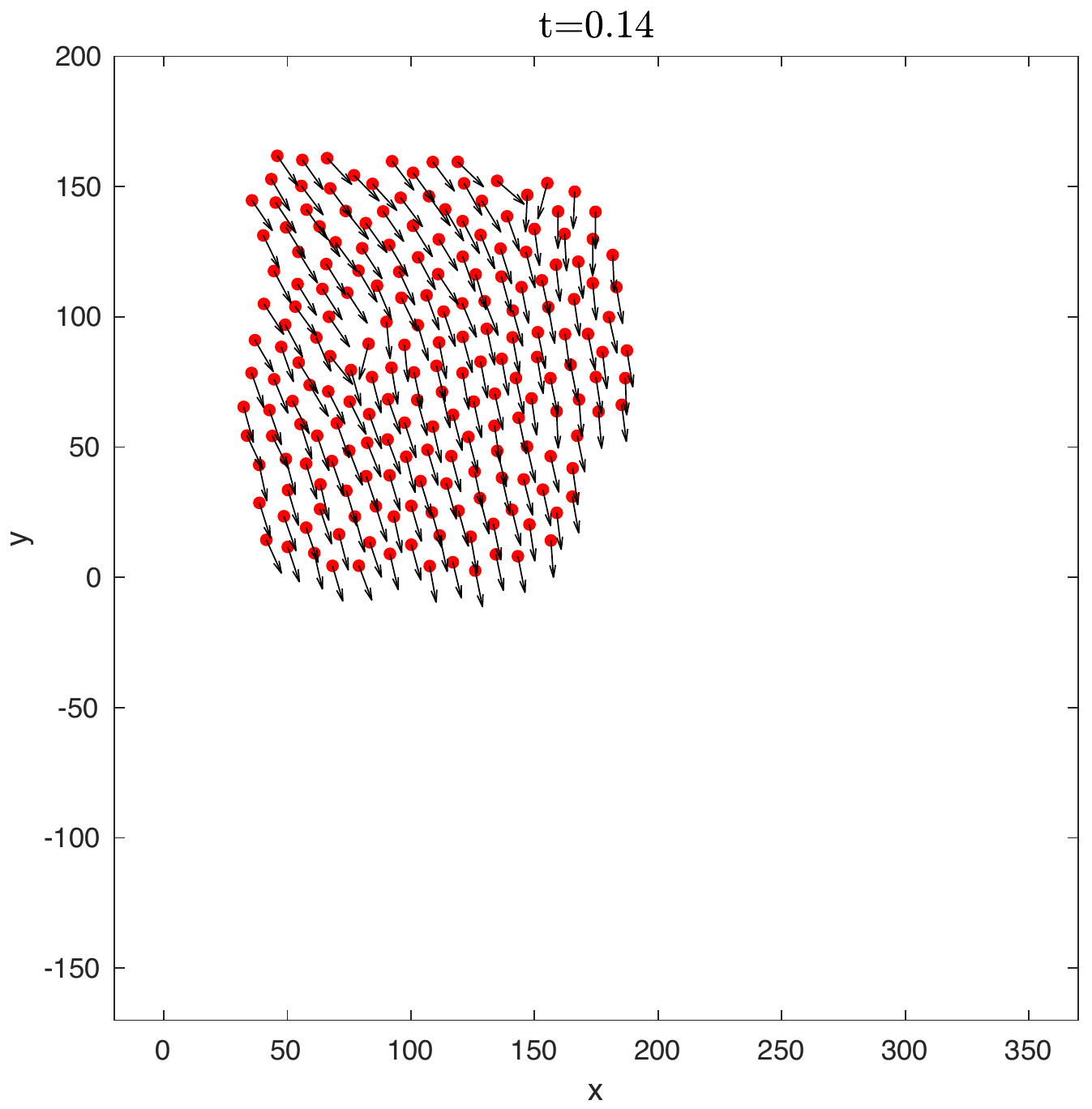}\hfill
    \textbf b.\includegraphics[width=0.47\textwidth]{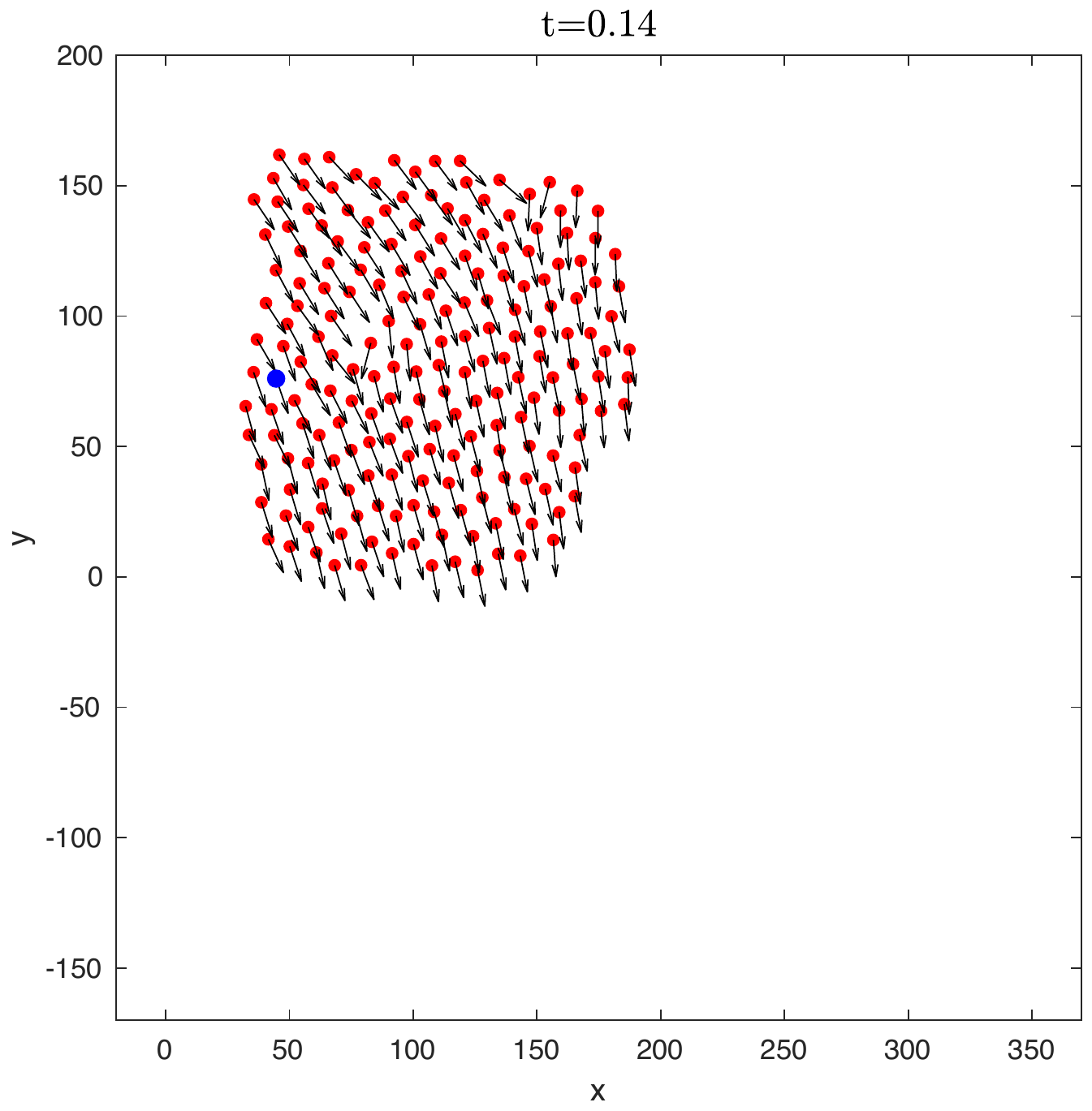}
    \textbf c.\includegraphics[width=0.47\textwidth]{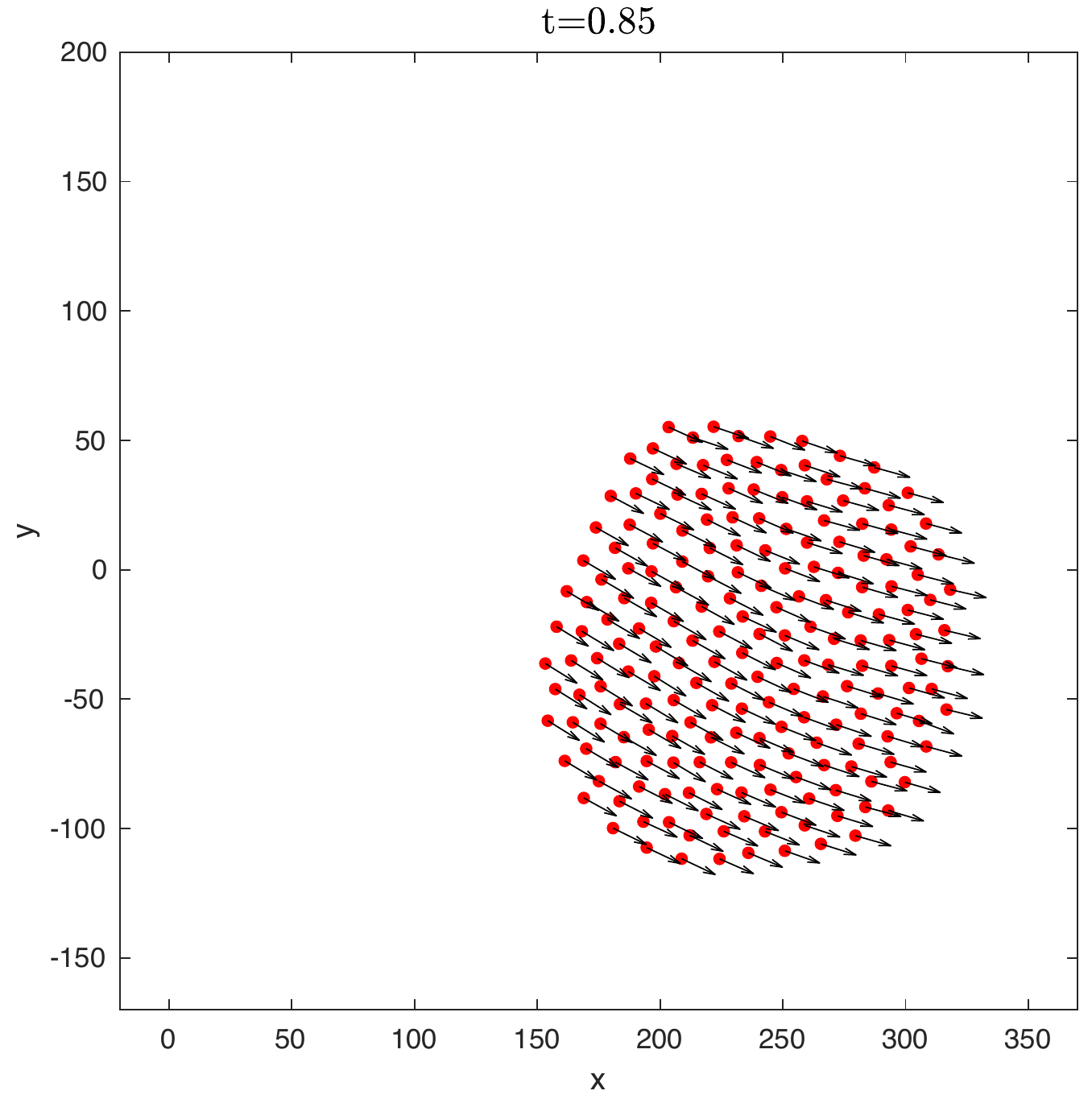}\hfill
    \textbf d.\includegraphics[width=0.47\textwidth]{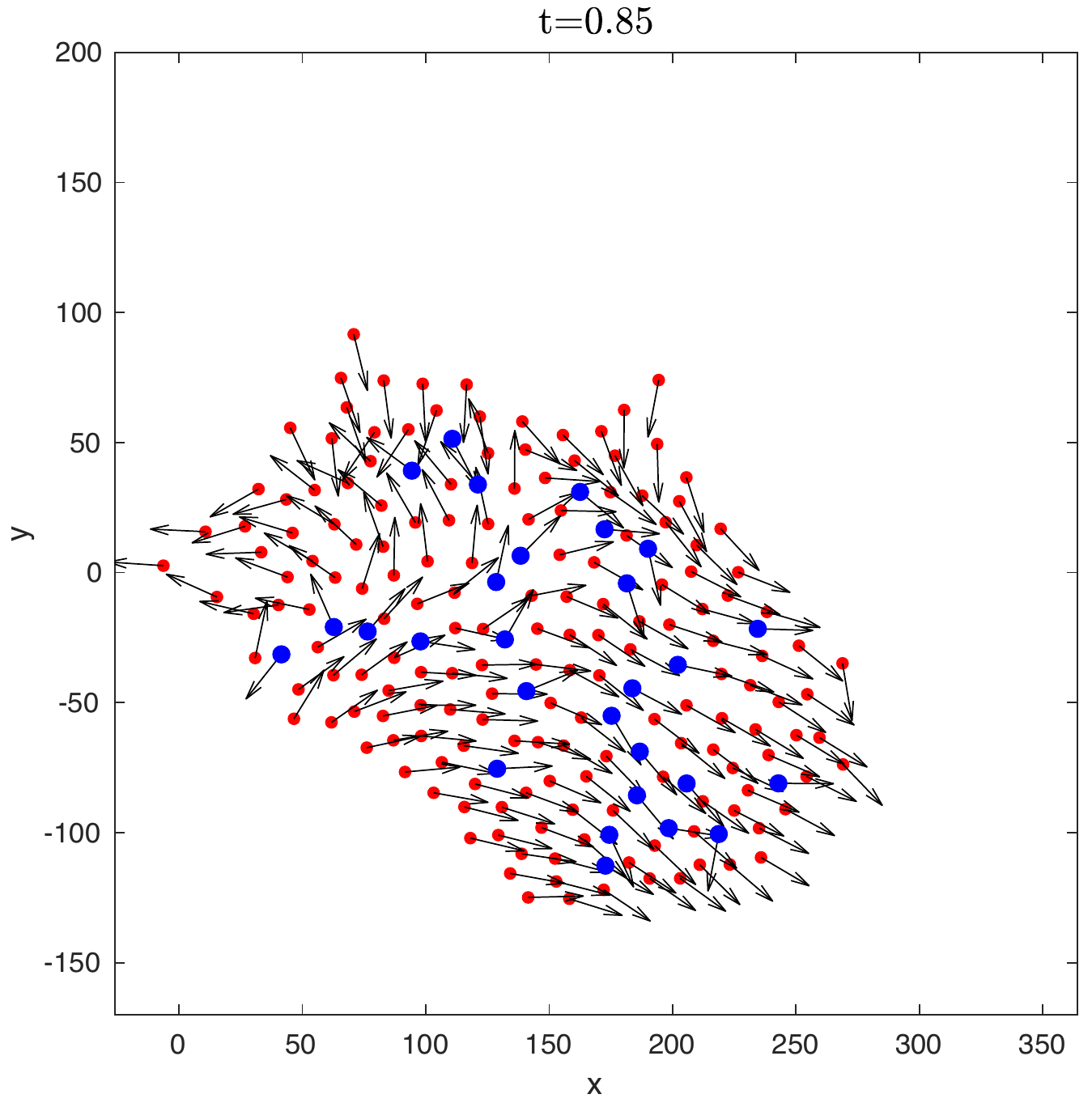}
    \caption{\revision{Test 2D: Screenshots of the numerical simulations without (\textbf a., \& \textbf c.) and with (\textbf b. \& \textbf d.) leaders. Red dots are followers, blue dots are leaders, and black arrows represent the normalized velocity. } }
    \label{test:new_test2D} 
\end{figure}

\newpage
The last test (Figure \ref{test:int_vs_ext_2D}) focuses on the difference between leaders flying at the boundary of the flock (Figure \ref{test:int_vs_ext_2D}\textbf {a}) and leaders flying in the interior of the flock (Figure \ref{test:int_vs_ext_2D}\textbf {b}). 
In order to show the turning attempt and its effects, it is convenient to look at the acceleration of the agents across the follower$\to$leader transition. 
In Figure \ref{test:int_vs_ext_2D}\textbf {c} we see that, once an external agent becomes leader, it is strongly repulsed outside the flock and experiences a strong acceleration. 
In Figure \ref{test:int_vs_ext_2D}\textbf {d}, instead, we see that an internal leader remains trapped in the flock, since all-around repulsion forces lead, together, to negligible effects.
\begin{figure}[h!]
    \centering
    \textbf a.\includegraphics[width=0.47\textwidth]{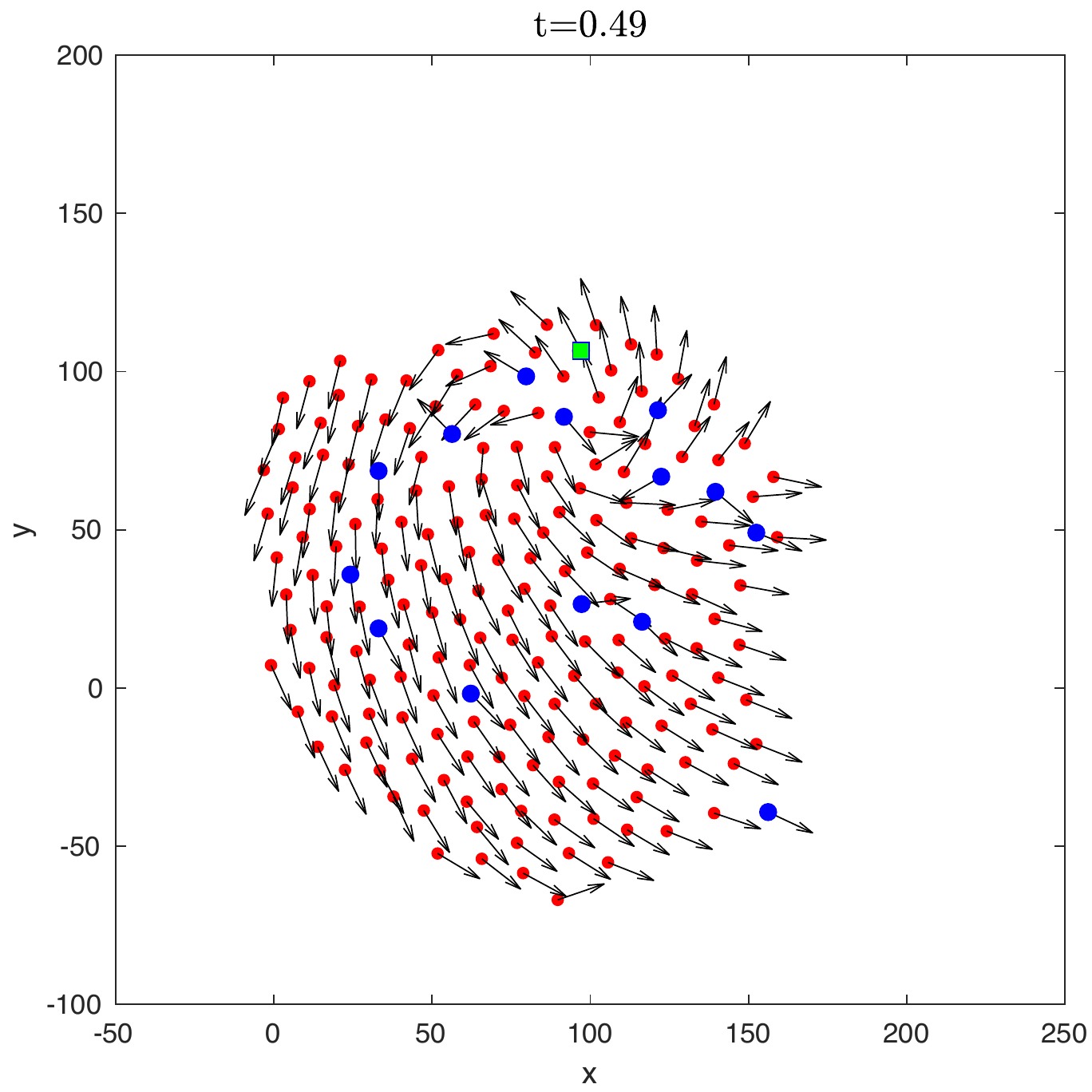}\hfill
    \textbf b.\includegraphics[width=0.47\textwidth]{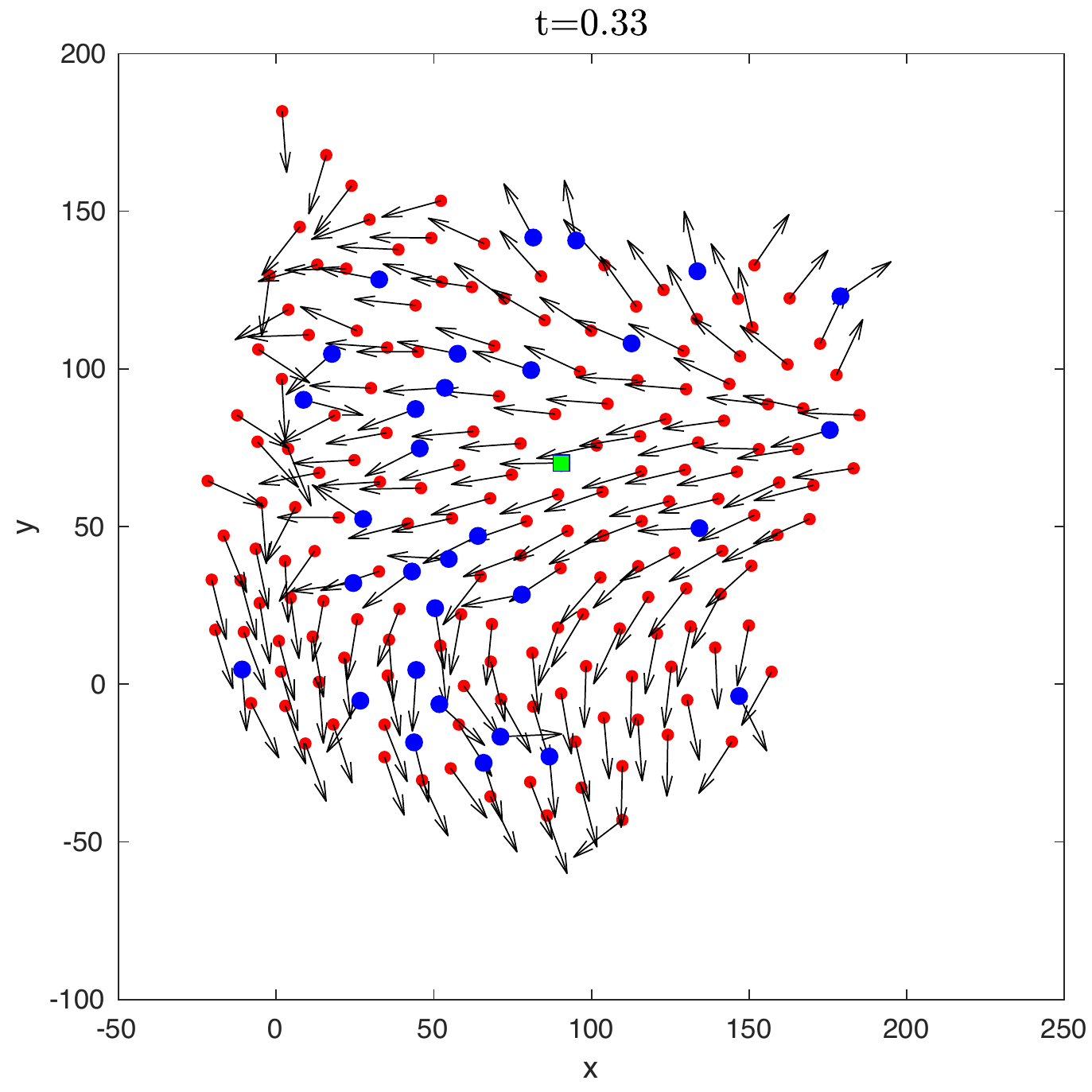} 
    \\
    \textbf c.\includegraphics[width=0.47\textwidth]{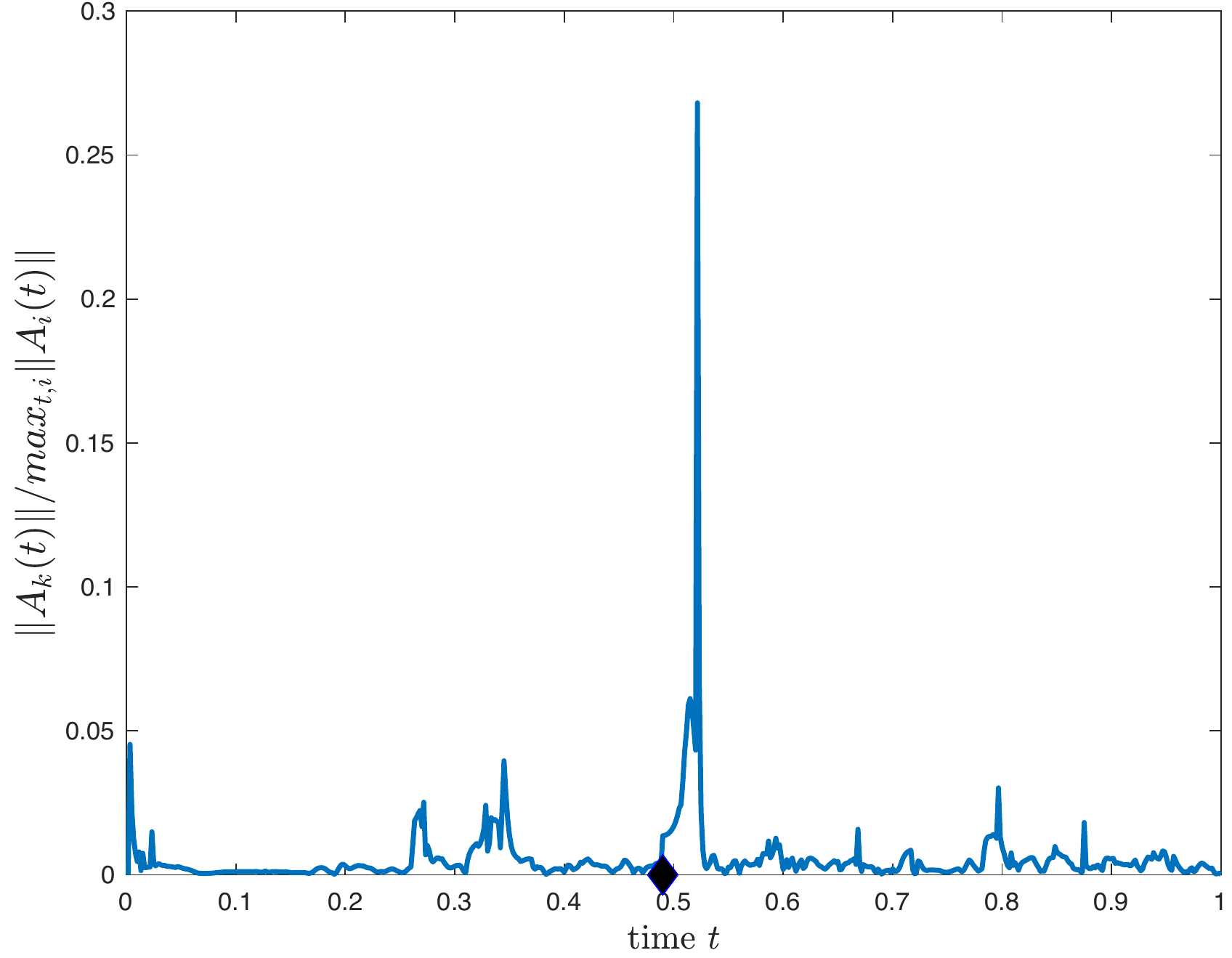}\hfill
    \textbf d.\includegraphics[width=0.47\textwidth]{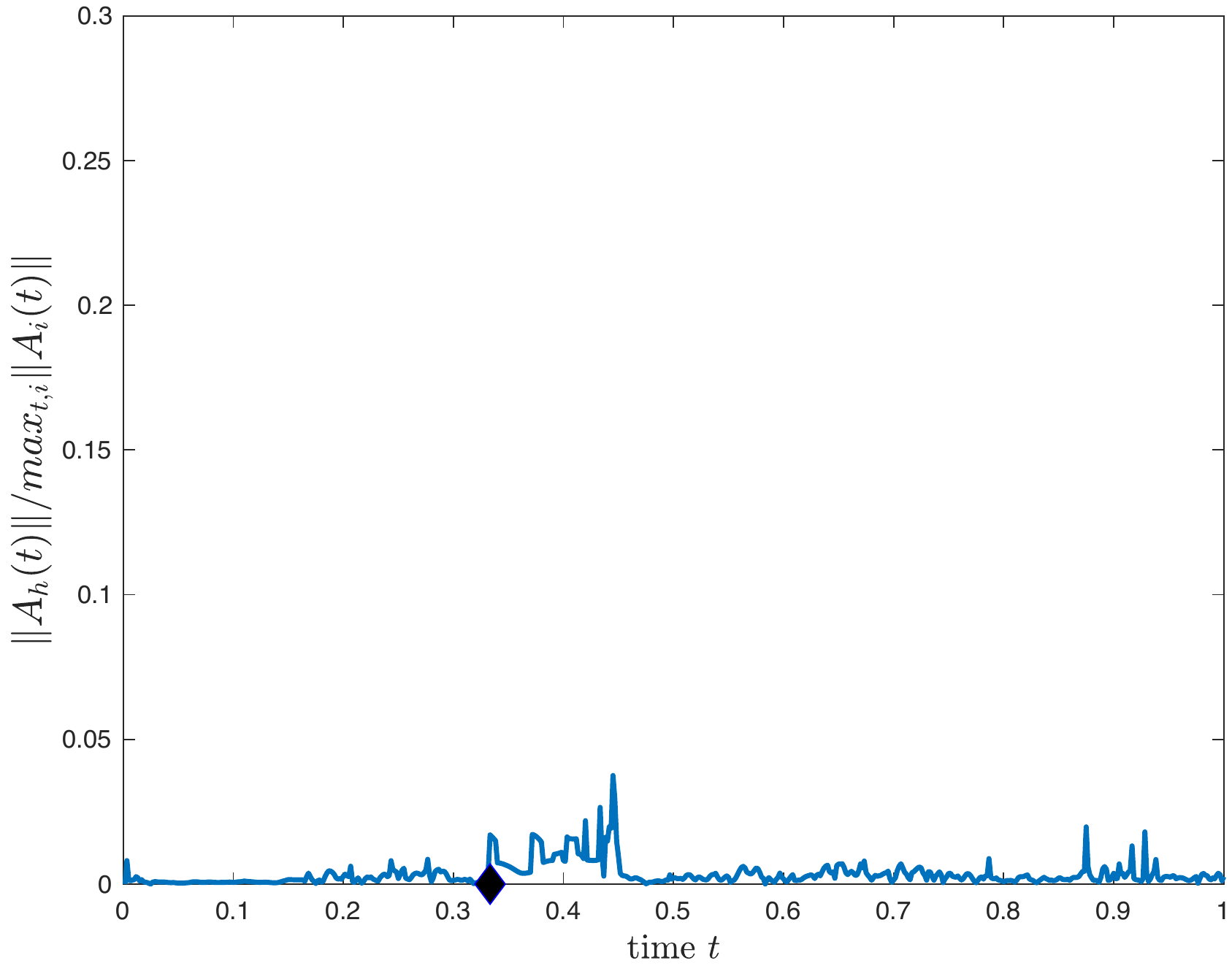}
    \caption{Test 2D: \textbf a. \& \textbf b. The flock at two instants of time. Red dots are followers, blue dots are leaders, green dots are selected leaders under observation. Arrows represent the normalized velocity. 
    In \textbf a., the selected leader \revision{$k$} is on the boundary of the flock, while in \textbf b. the selected leader \revision{$h$} is internal. 
    \textbf c. \& \textbf d. The modulus of the acceleration of the selected leaders in \textbf a. and \textbf b., respectively. The black diamond denotes the beginning of the leadership. The values of the modulus are normalized w.r.t.\ the maximum modulus observed during the entire simulation among all agents.} 
    \label{test:int_vs_ext_2D} 
\end{figure}
\clearpage

\subsection{A small 3D flock}
Here we present a 3D test with 400 agents. 

\revision{Figure \ref{test:screenshots3Dsmall}\textbf{a} and Figure \ref{test:screenshots3Dsmall}\textbf{c} show three screenshots of the moving flock. 
In the middle of the simulation (i.e.\ excluding the initial and the final fast turns) it is clearly visible a downward turn, which causes the flock to move downward for a while. 
This turn is caused by the combined effect of several leaders and it is better investigated in the Figure \ref{test:staffetta3D}. 
This figure shows the function $V_k^3(t)$ for six agents which -- by chance -- become leader one after the other and are all located at the bottom of the flock. 
As soon as they become leader, their $z$-component $V^3$ of the velocity immediately decreases, since they start pointing downward. 
Once they return to be followers, they slowly come back to the flock.
Figure \ref{test:screenshots3Dsmall}\textbf{d} shows the trajectory of the flock's barycenter. In particular, the yellow part, corresponding to the leadership period of the investigated leaders, highlights their crucial role in triggering the overall downward turn.
}
%
\begin{figure}[h!]
    \centering
    \textbf a.\includegraphics[width=0.477\textwidth]{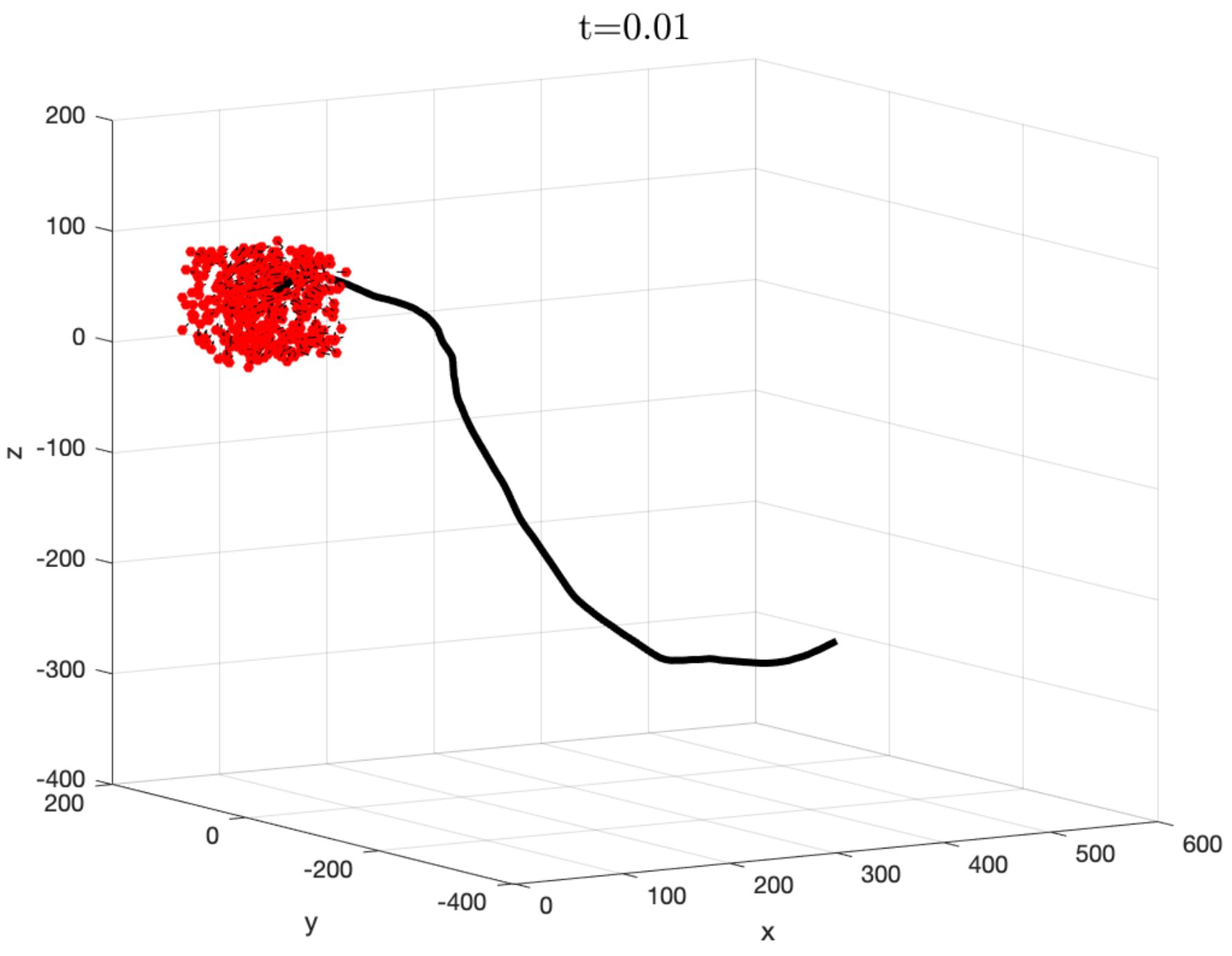}\hfill
    \textbf b.\includegraphics[width=0.477\textwidth]{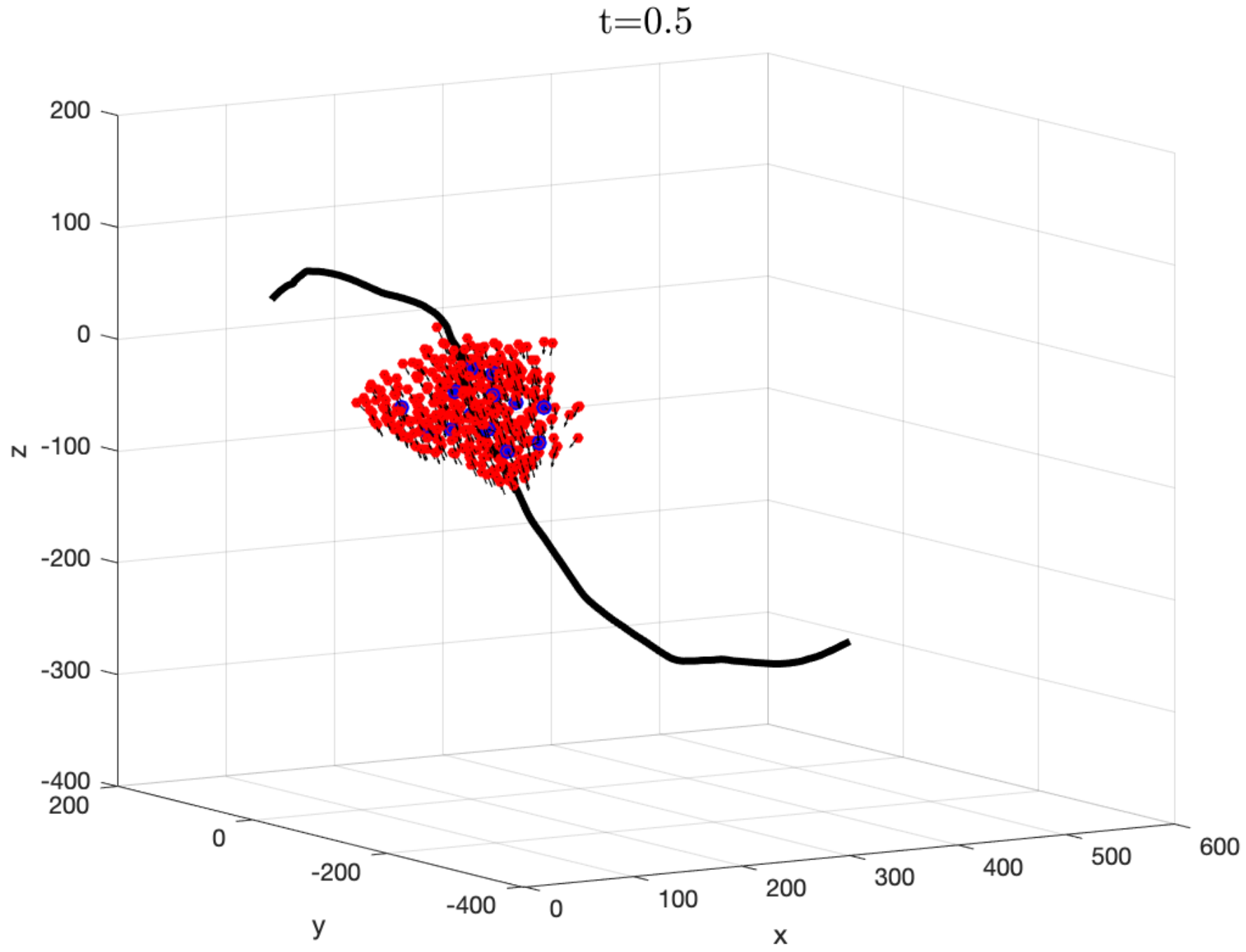} 
    \\
    \textbf c.\includegraphics[width=0.477\textwidth]{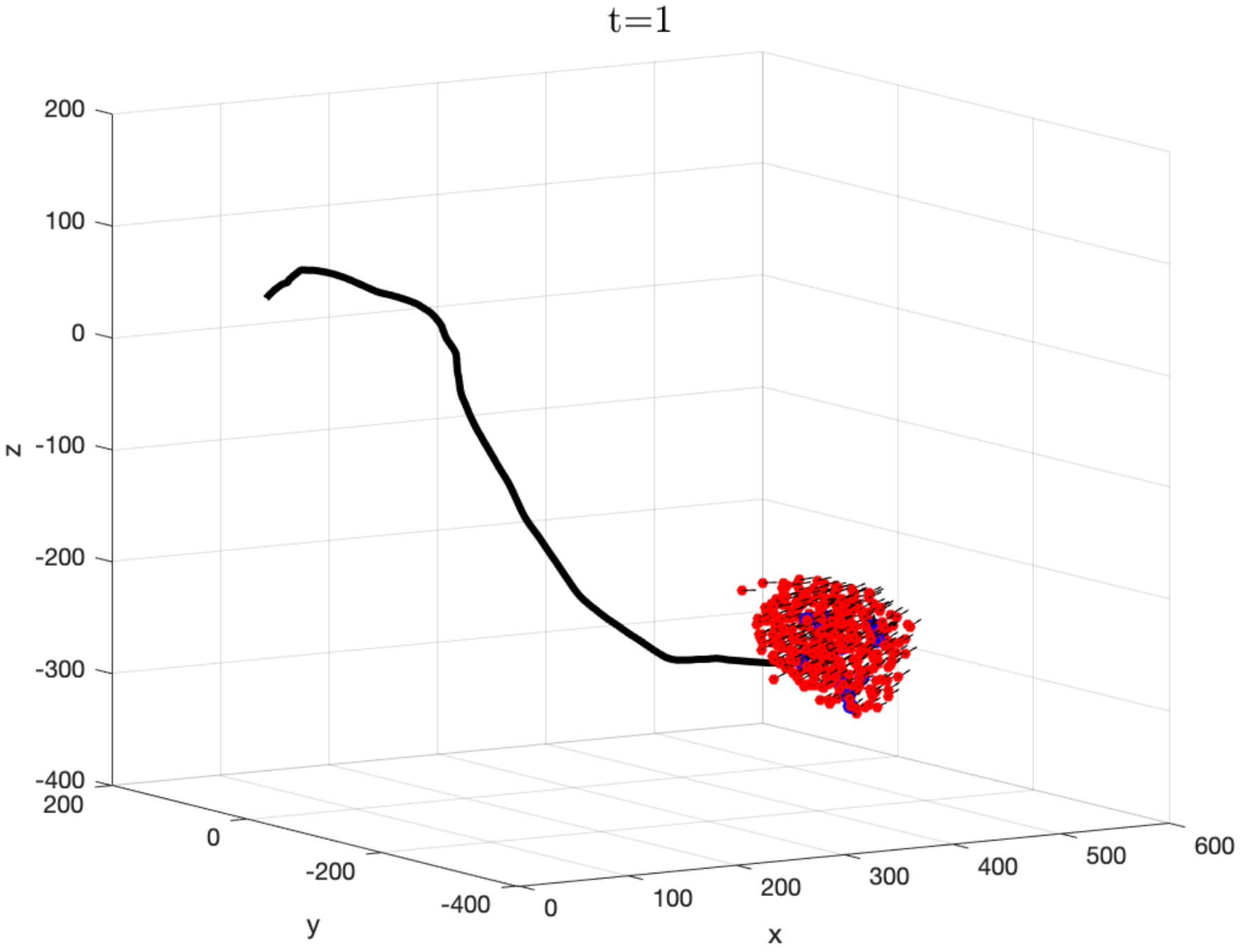}\hfill
    \textbf d.\includegraphics[width=0.477\textwidth]{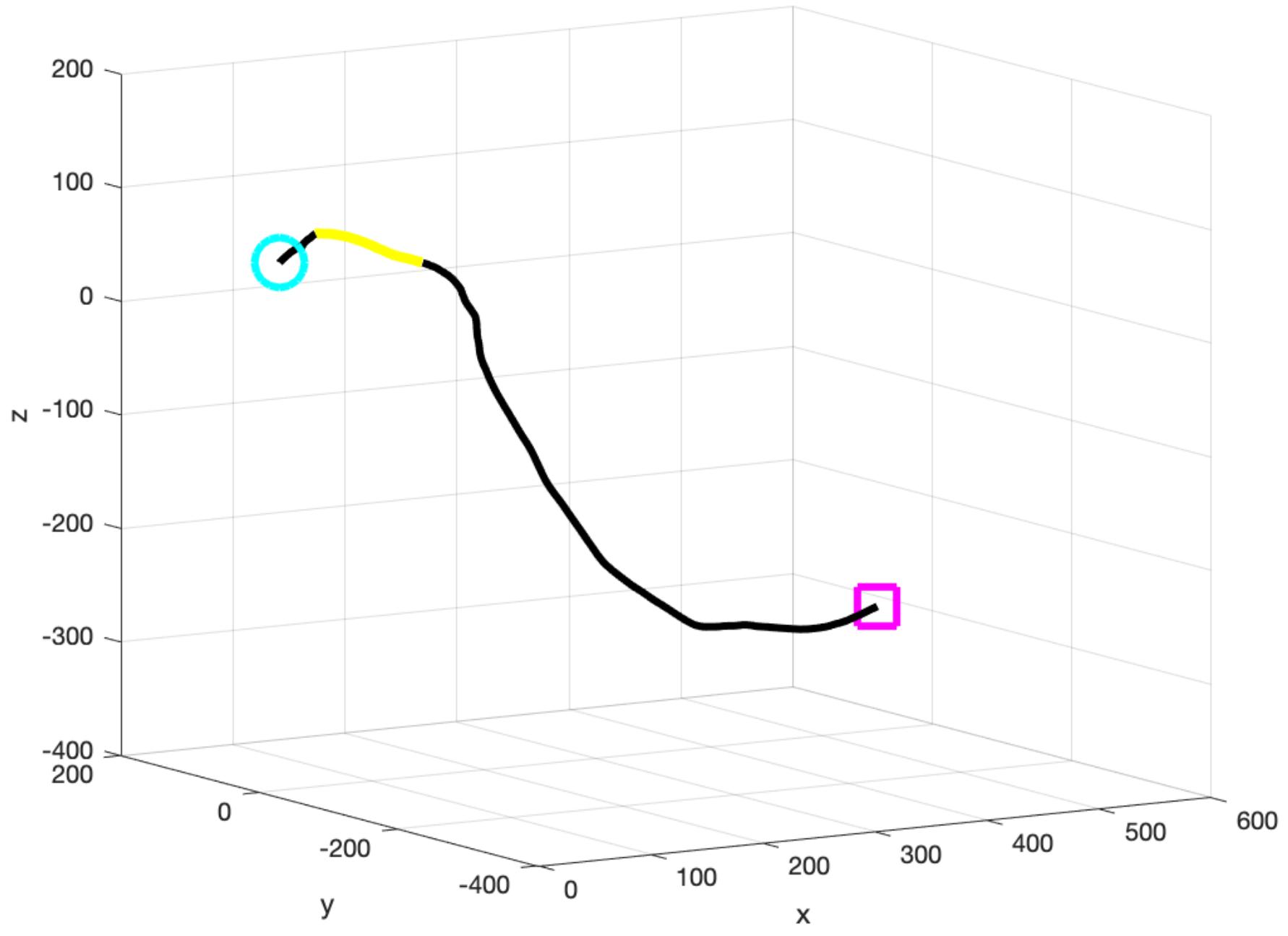}
    \caption{Test \revision{3D-small}: \textbf a., \textbf b. \& \textbf c. Three screenshots of the moving flock. Black line is the trajectory of the barycenter. 
    \revision{\textbf d. Trajectory of the barycenter: cyan circle and magenta square denote, respectively, the initial and final positions of flock barycenter. The triggering phase of the downward turn is marked in yellow. }
    Animated simulation available at \url{www.emilianocristiani.it/attach/starlings-3Dsmall.mov}
    } 
    \label{test:screenshots3Dsmall} 
\end{figure}
%
%

\clearpage

\begin{figure}[h!]
    \centering
    \includegraphics[width=0.72\textwidth]{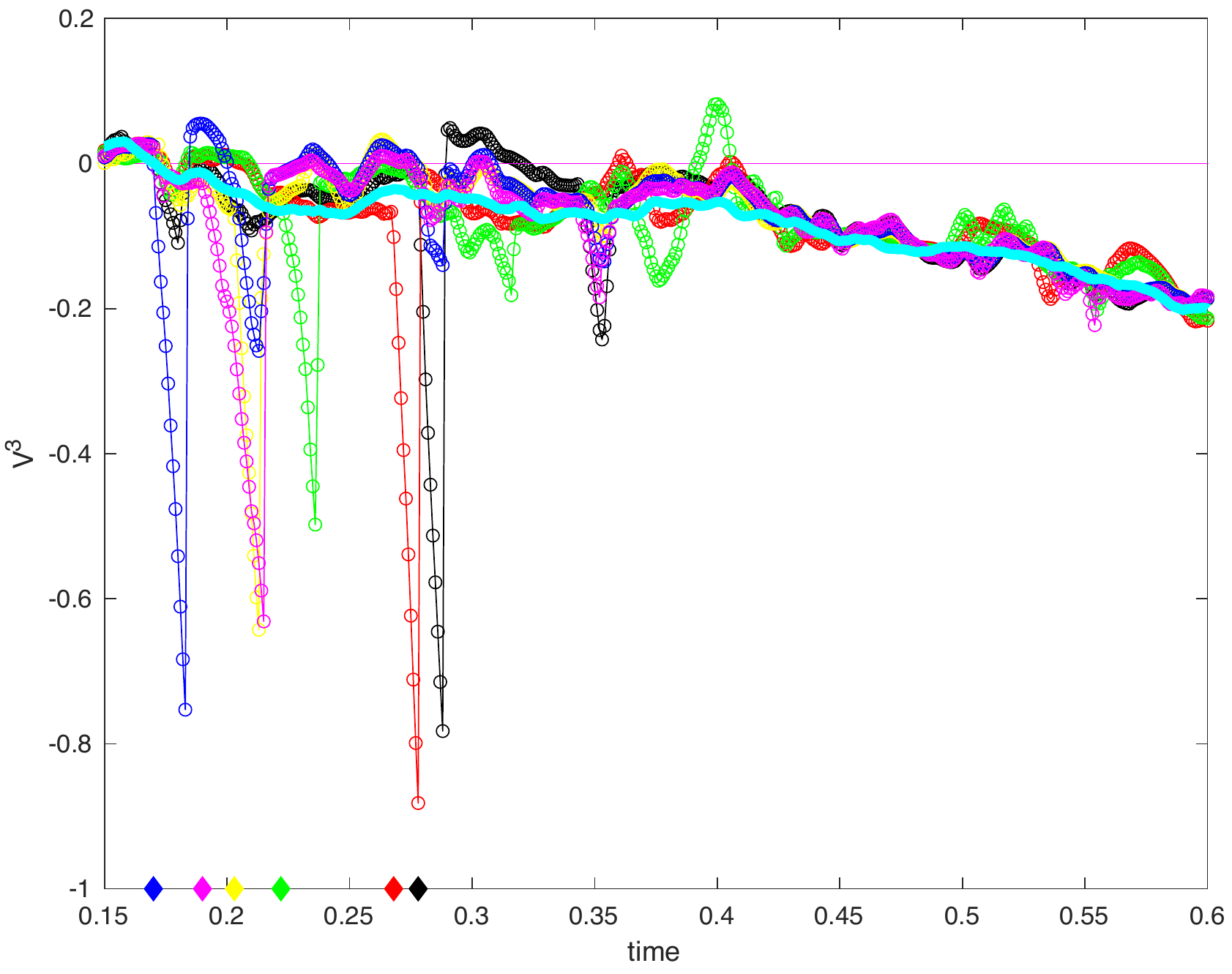}
    \caption{Test \revision{3D-small}: $z$-component $V^3$ of the velocity of six agents all located at the bottom of the flock which becomes leader one after the other. Diamonds denote the time agents become leaders. \revision{The light blue curve represents the mean  $z$-component of the velocities of the followers.}}
    \label{test:staffetta3D} 
\end{figure}
Figure \ref{test:turningtrigger} shows the flock during the leadership period of one of the six agents mentioned before. It is clearly visible its contribution to the downward turn and its effect  among its neighbors.
\begin{figure}[h!]
    \centering
    \includegraphics[width=0.73\textwidth]{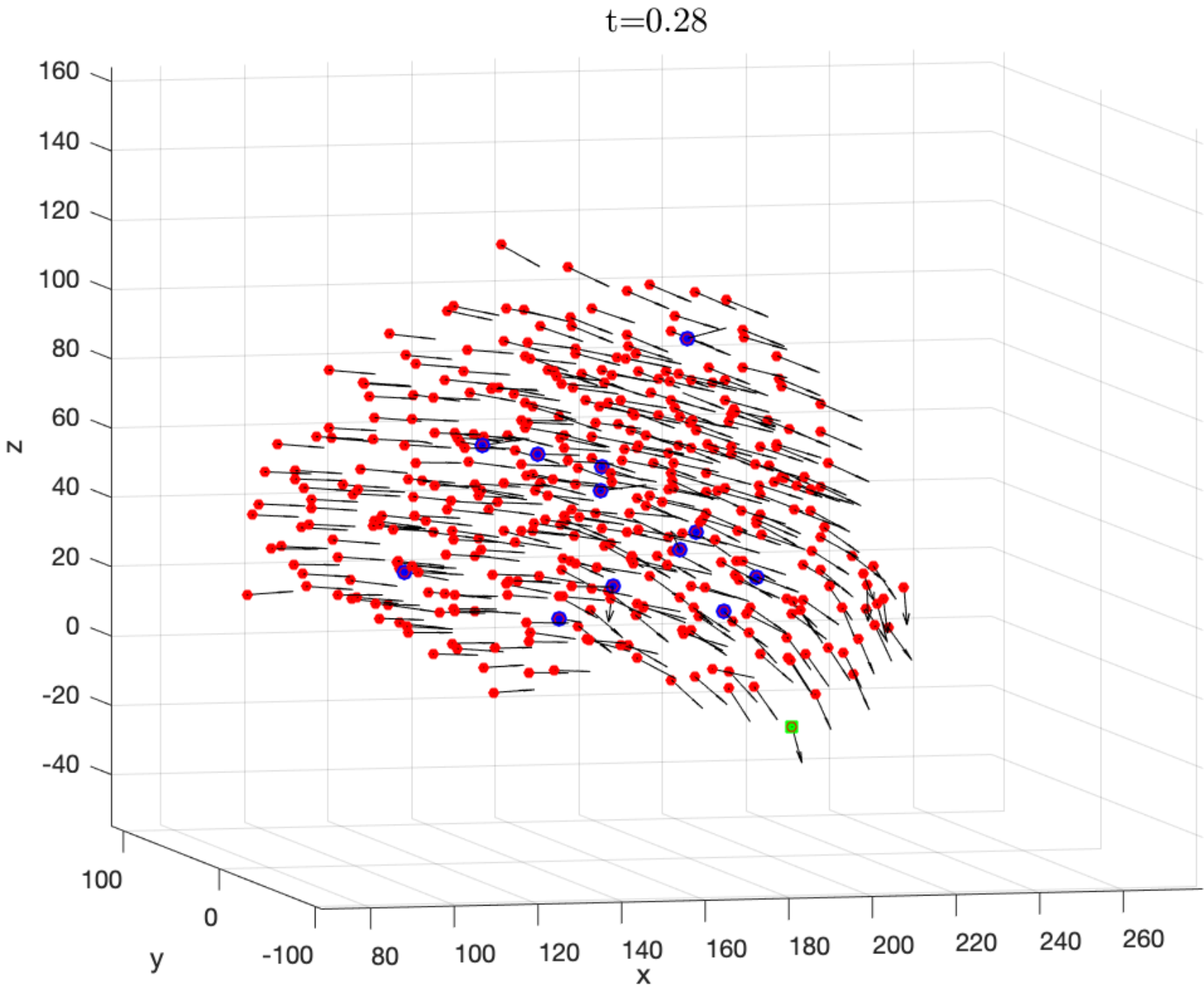}
    \caption{Test \revision{3D-small}: turn triggered by the six leaders shown in Figure \ref{test:staffetta3D}. Green square denotes the agent, among the six, which is leader at this time and is triggering (better, reinforcing) the downward turn.}
    \label{test:turningtrigger} 
\end{figure}

\clearpage

\subsection{A large 3D flock}
Here we present a 3D test with 2000 agents.
Figure \ref{test:screenshots3Dlarge} shows four screenshots of the moving flock, \revision{and Figure \ref{test:3Dlarge-bar-elong} shows the trajectory of the barycenter of the flock and its elongations in the directions $x$, $y$, and $z$. }
The flock stretches and compresses in a way directly comparable to the flocks in Figure \ref{fig:realstarlings}. 
In this case the flock is so large that more than one turn at the same time can be triggered, in different parts of the flock. 
\revision{At a certain time the flock seems to split, but then it reunites.}

\begin{figure}[h!]
    \centering
    \textbf a.\includegraphics[width=0.47\textwidth]{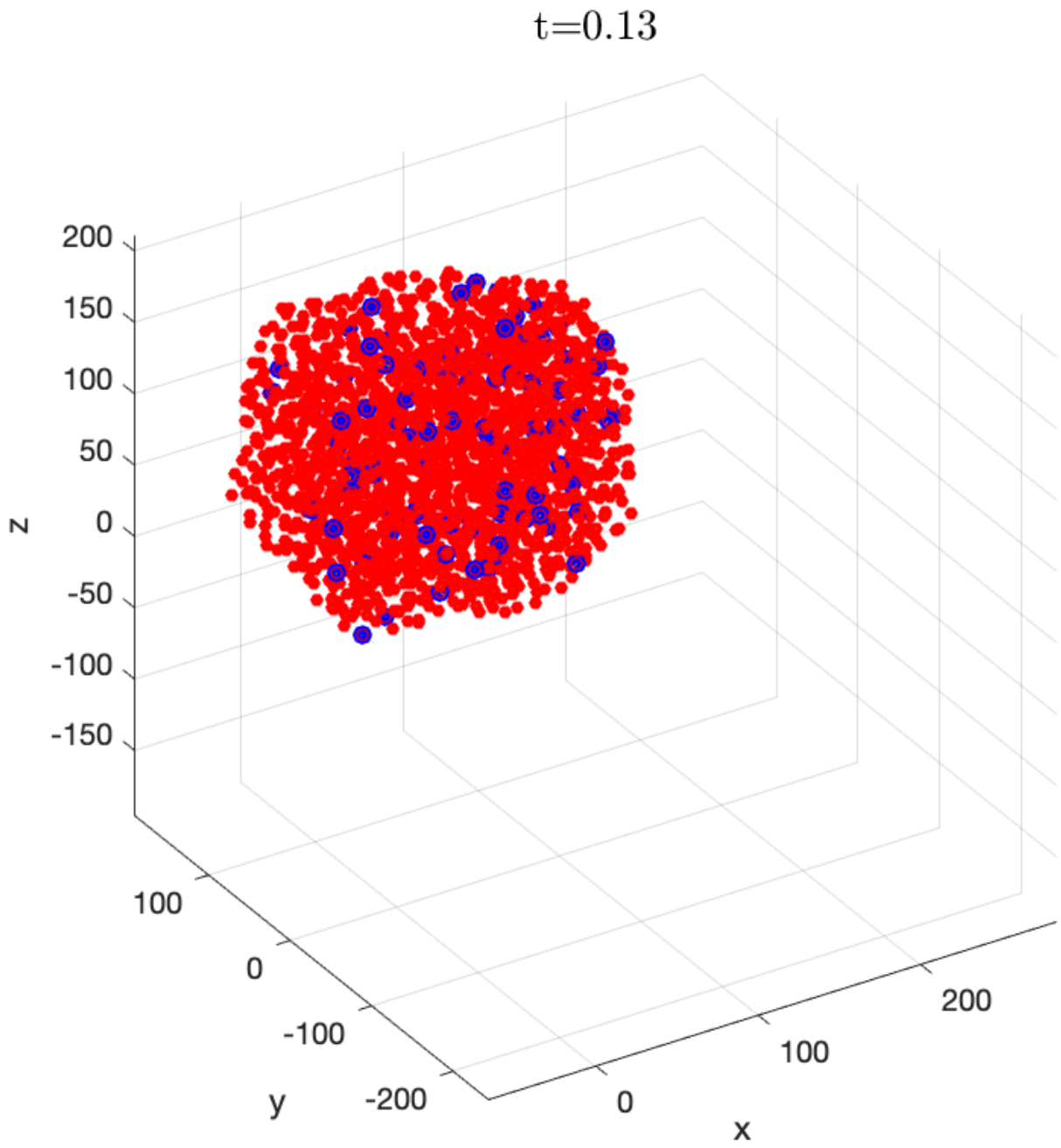}\hfill
    \textbf b.\includegraphics[width=0.47\textwidth]{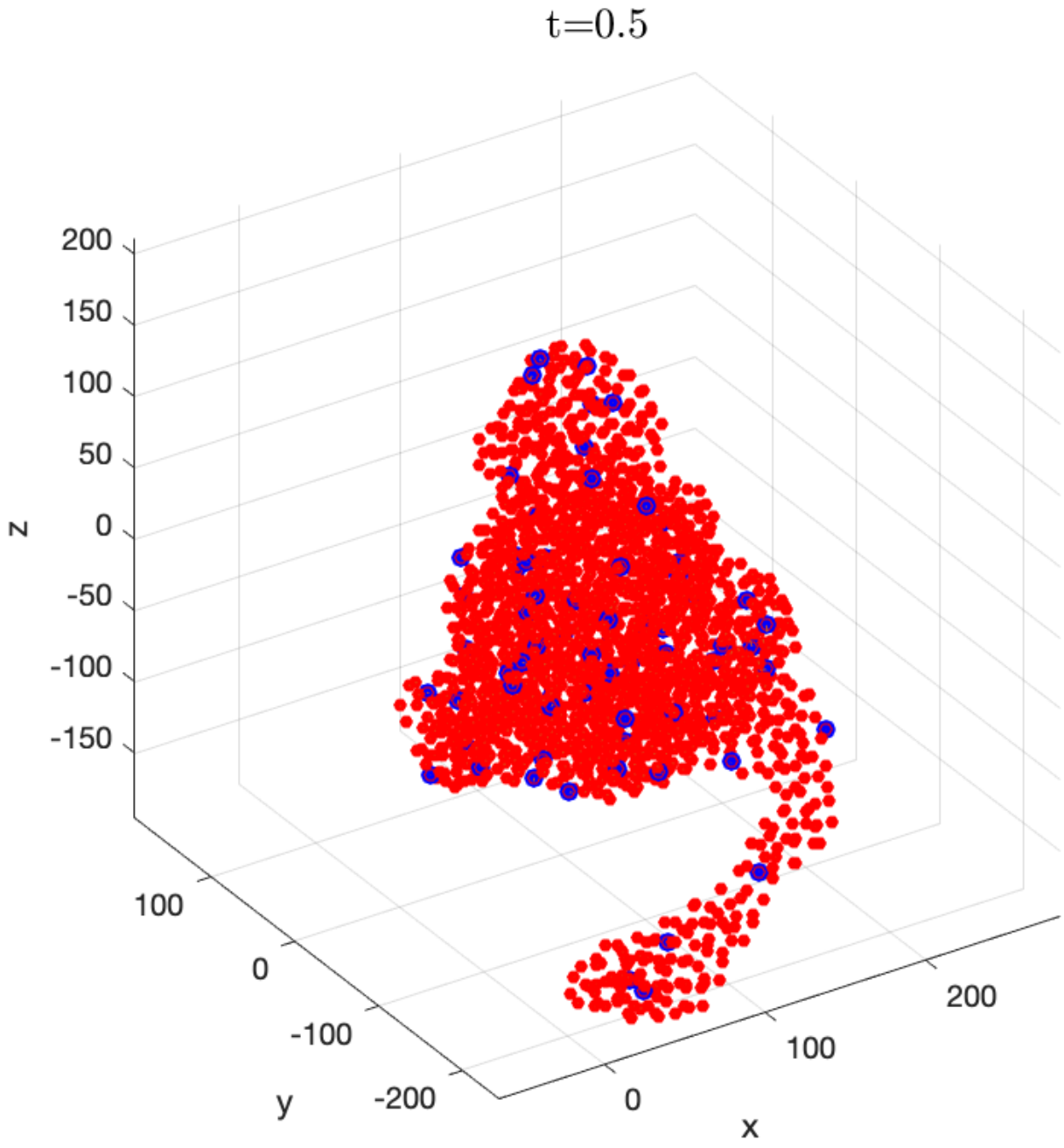} 
    \\
    \textbf c.\includegraphics[width=0.47\textwidth]{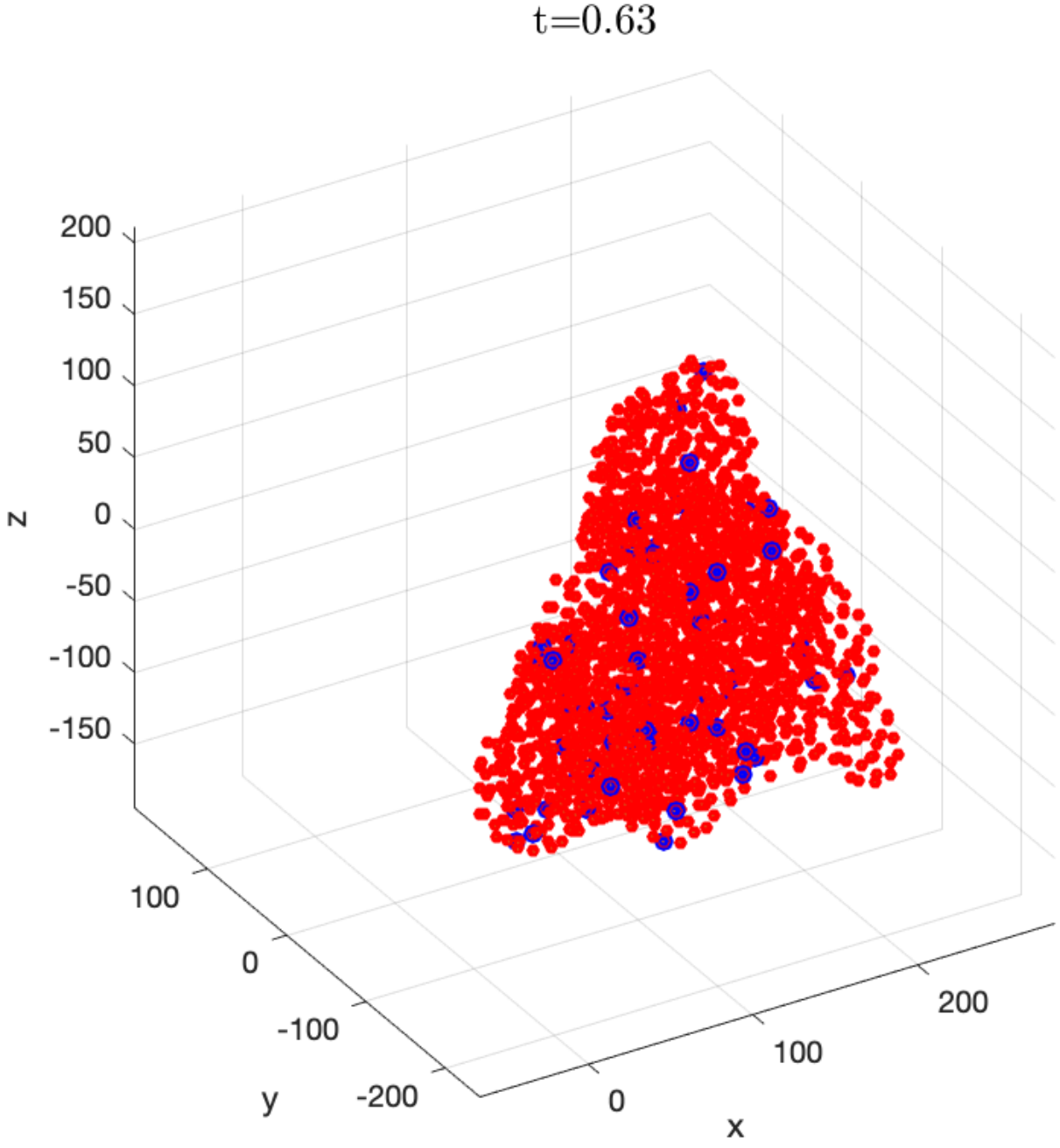}\hfill
    \textbf d.\includegraphics[width=0.47\textwidth]{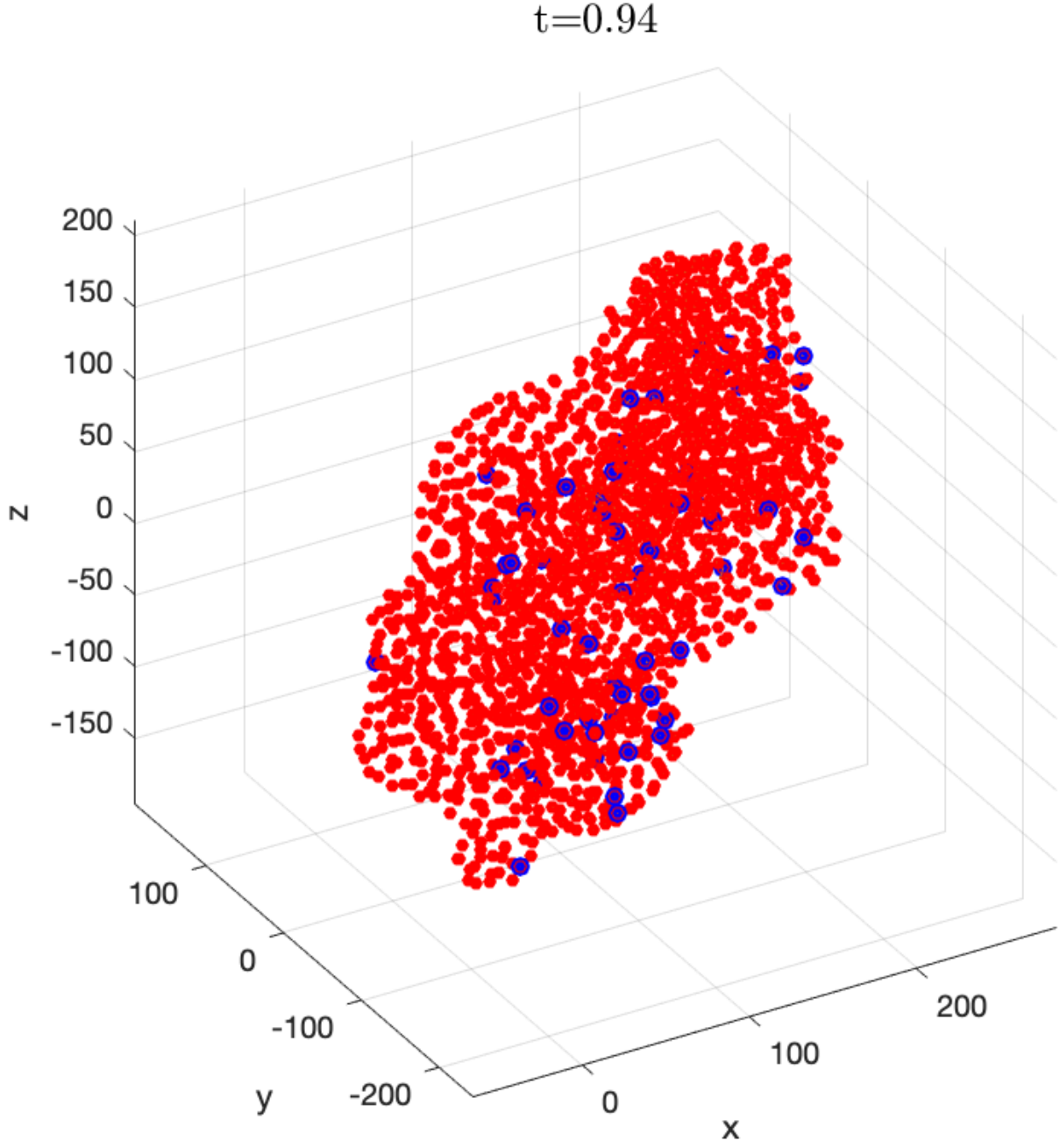}
    \caption{Test \revision{3D-large}: Four screenshots of the moving flock. The flock seems to split at a certain time (\textbf b.), but then it reunites. 
    Animated simulation available at \url{www.emilianocristiani.it/attach/starlings-3Dlarge.mov}
    }
    \label{test:screenshots3Dlarge} 
\end{figure}

\clearpage
\begin{figure}[h!]
    \centering
    \textbf a.\includegraphics[width=0.43\textwidth]{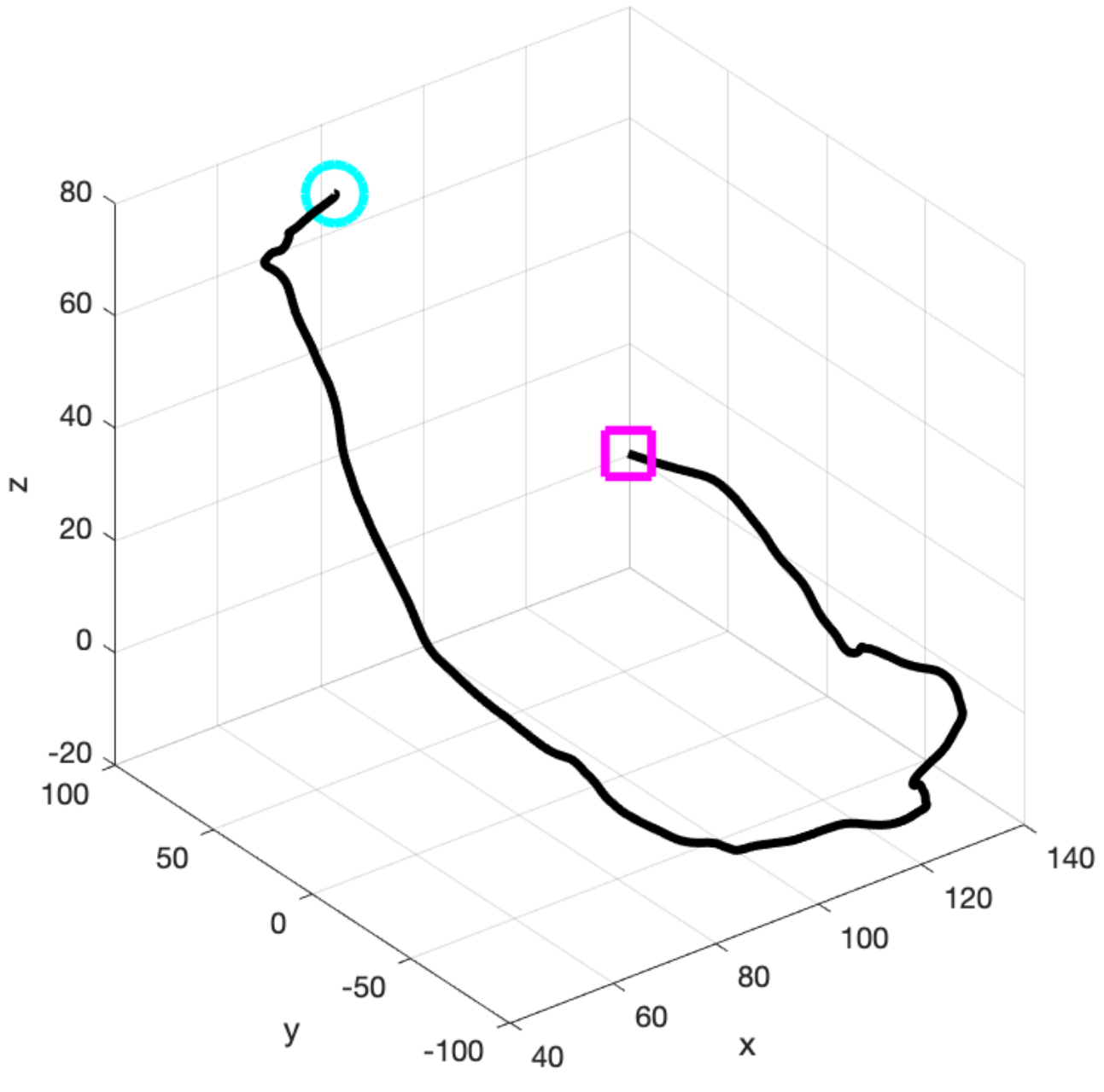}\hfill
    \textbf b.\includegraphics[width=0.49\textwidth]{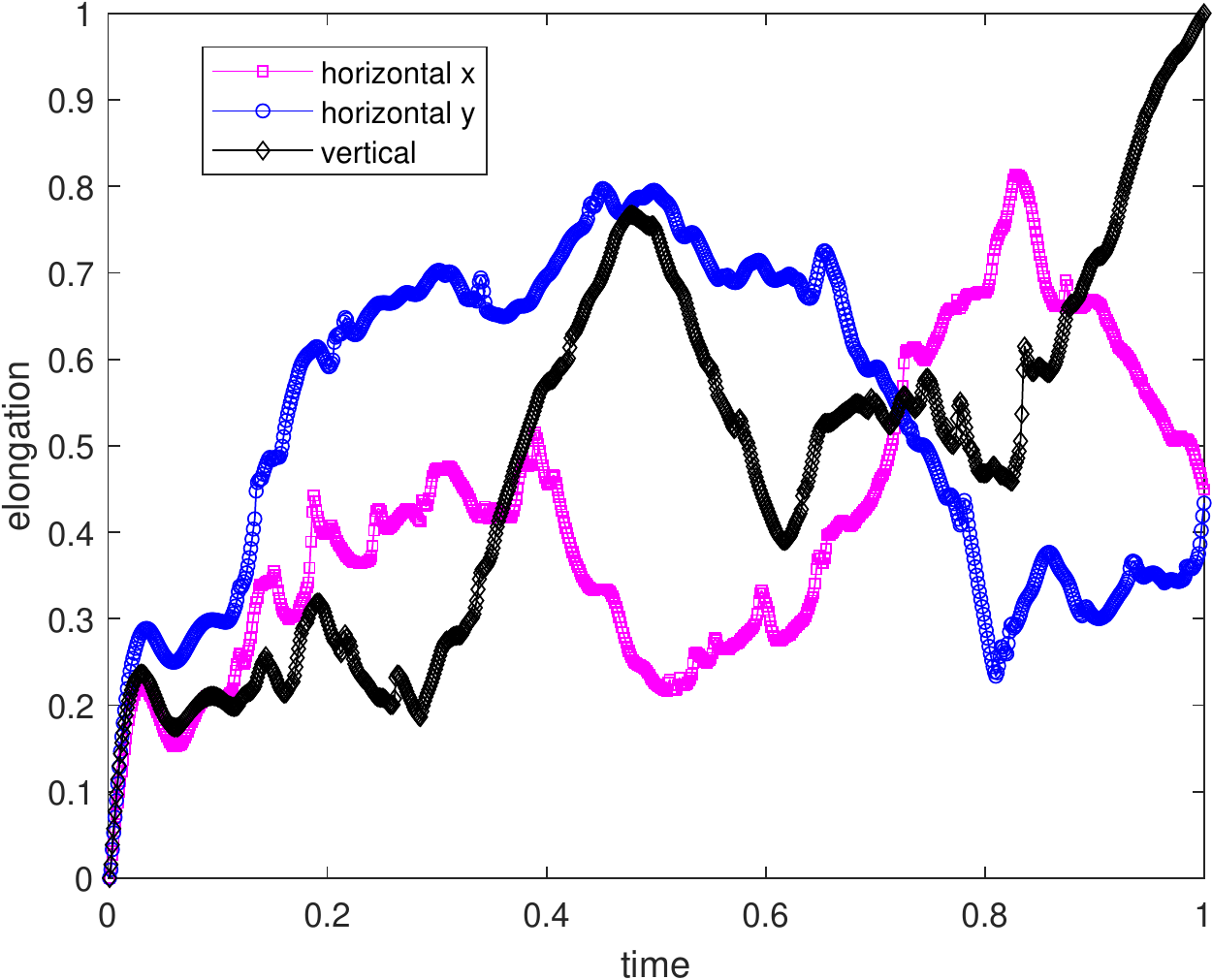} 
    \caption{Test \revision{3D-large}:  \revision{\textbf a. Trajectory of the barycenter: cyan circle and magenta square denote, respectively, the initial and final positions of flock barycenter.} \textbf b. Elongations of the flock in the three directions of the space. Elongations are measured in percentage with respect to the minimal and maximal elongations reached during simulation.}
    \label{test:3Dlarge-bar-elong} 
\end{figure}

\revision{Figure \ref{test:3Dlarge_revision} shows instead a simulation which results in a flock splitting. 
With respect to the parameters setting of the previous 3D test, we have increased the value of the probability transition to $5\times 10^{-4}$, and decreased the refractory time, setting $\frak r=200$, hence allowing a same agent to become leader again after a shorter time.
As a result, a higher number of contemporary turns in different directions appear and cause the division of the flock into two autonomous groups (see Figure \ref{test:3Dlarge_revision}\textbf c and Figure \ref{test:3Dlarge_revision}\textbf d).}

\begin{figure}[h!]
    \centering
    \textbf a.\includegraphics[width=0.47\textwidth]{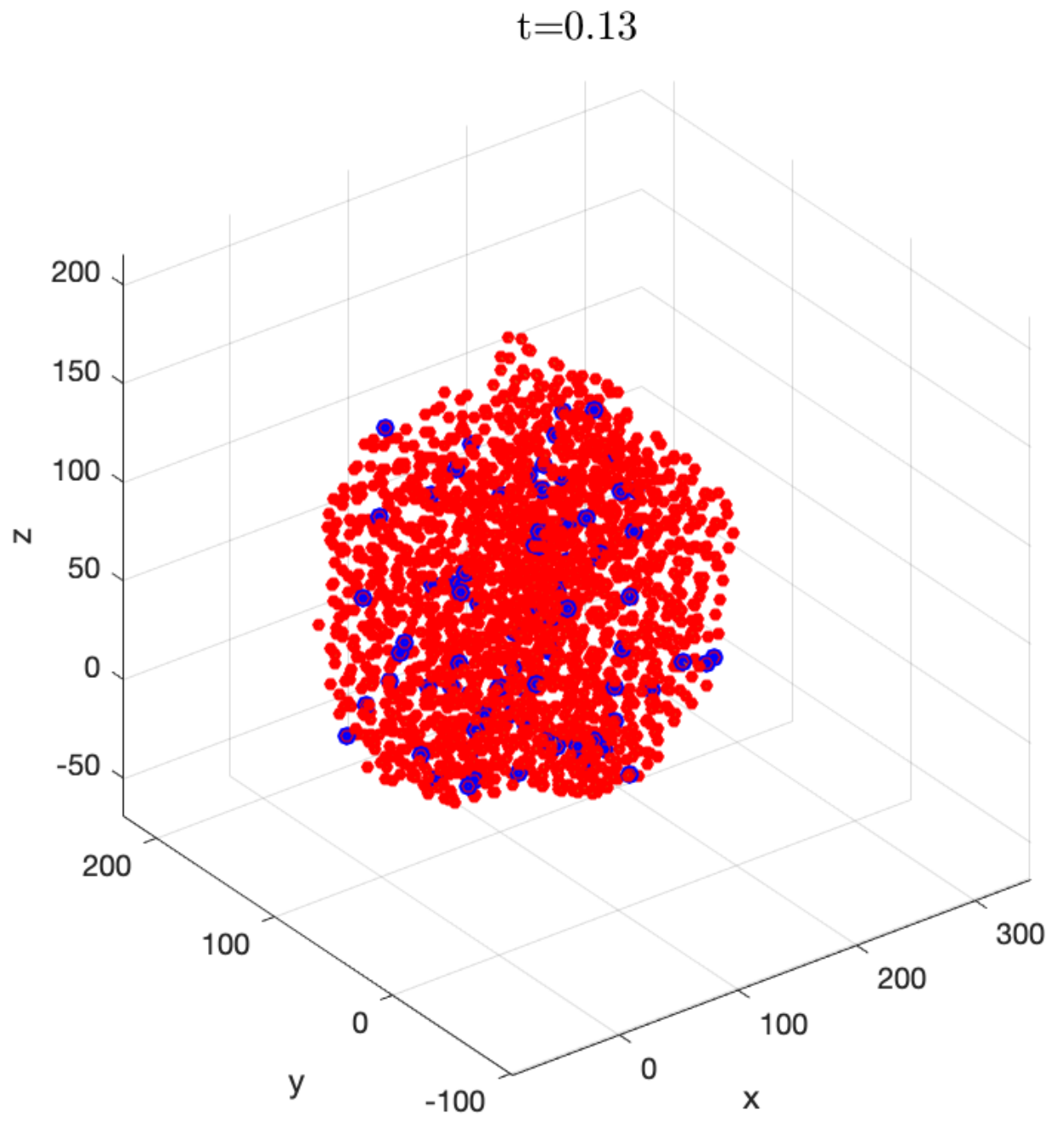}\hfill
    \textbf b.\includegraphics[width=0.47\textwidth]{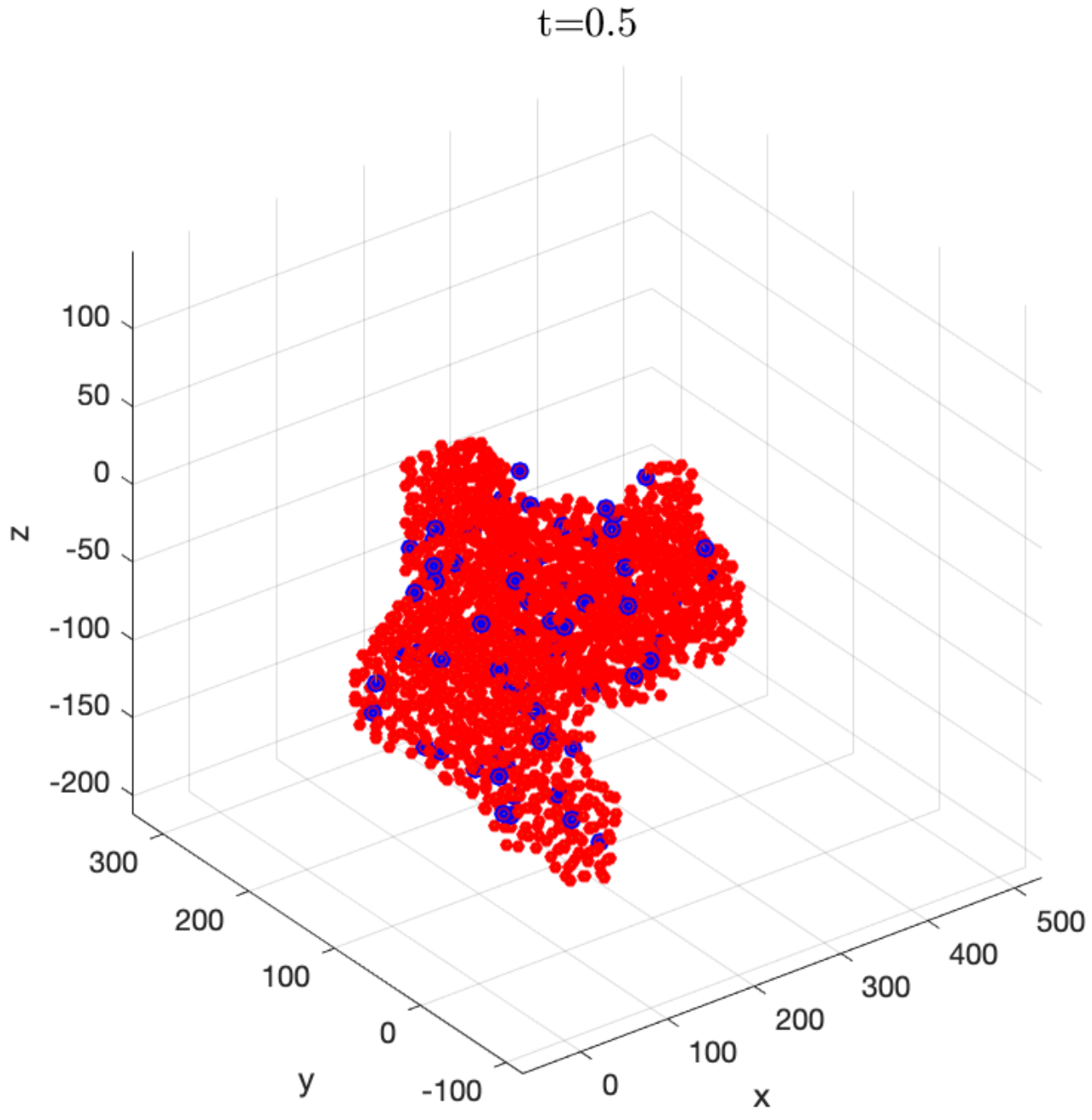} 
    \\
    \textbf c.\includegraphics[width=0.47\textwidth]{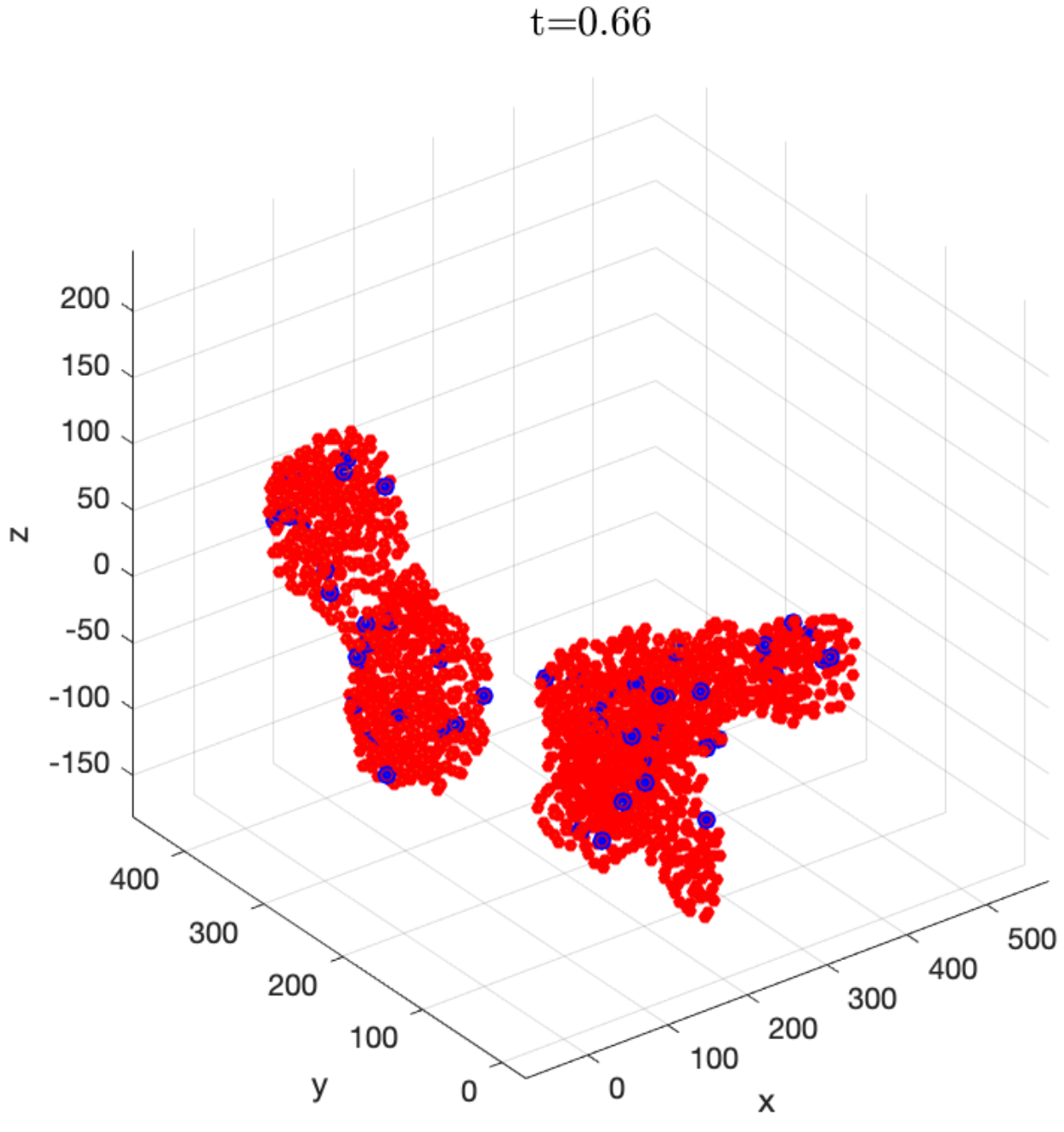}\hfill
    \textbf d.\includegraphics[width=0.47\textwidth]{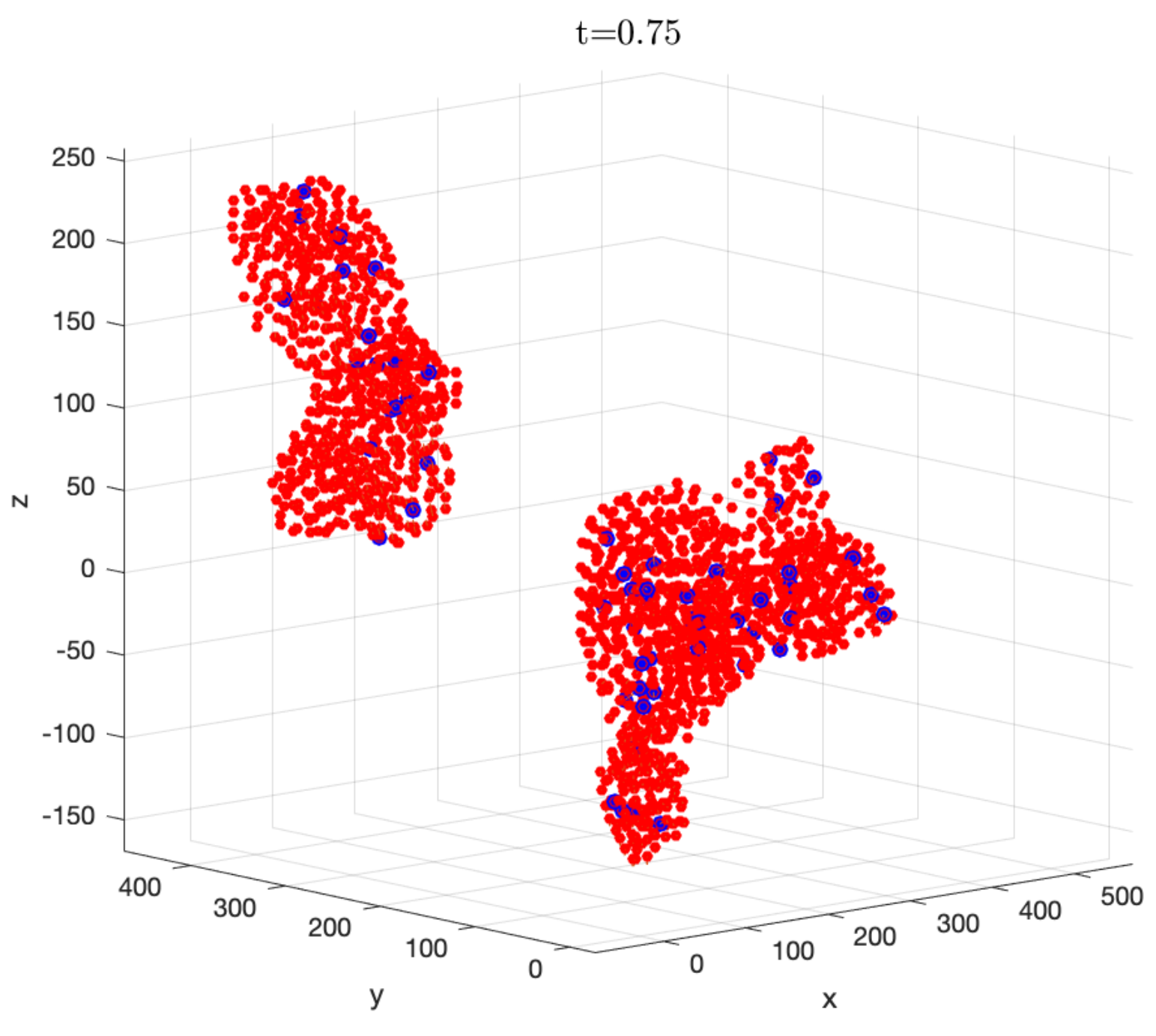}
    \caption{\revision{Test 3D-large: Four screenshots of the moving flock. The flock splits into two separated groups (\textbf c.).  Animated simulation available at \url{www.emilianocristiani.it/attach/starlings-3Dlarge-split.mov}} }
    \label{test:3Dlarge_revision} 
\end{figure}

%
%



\section{\revision{Discussion and biological insights}} \label{sec:conclusions} 
Theoretical and numerical results allow us to sketch some conclusions\revision{. F}irst of all, we have proved that the model is well-posed in between two changes of the set of neighbors of each bird, this guarantees that the numerical solution is meaningful. Secondly, we have isolated, in a minimal model, the features which reproduce realistic turning behavior in a setting where all agents follow the same, elementary, behavioral rules. Comparing with real birds aerial displays, our model seems to be particularly suitable for reproducing large flocks, i.e.\ flocks where two or more turnings coexist, or even the flock splits.

The \revision{model} suggests that there is no need to distinguish agents behavior on the basis of their age, gender, or position in the flock, especially between boundary and interior.  
There is also no need to assume that they even know to be on the edge of the flock, although they reasonably perceive it to some extent.
Similarly, there is no need to assume any form of ``leadership contagion'', i.e.\ the fact that an agent increases the chance to become a leader after seeing another leader. 
Of course, the fact that the model reproduces a correct behavior without contagion does not imply that contagion does not exist, at least in some form.

The presence of the delay in the dynamics seems to be very important, but removing it ($\delta=0)$ does not cause the complete loss of the desired features. 
This is perfectly in line with the results obtained in \cite{caglioti2020,cavagna2015}, where  third-order dynamics are used to correctly reproduce turns (since a second-order dynamics with delay is \revision{basically} equivalent to a third-order dynamics).
From the numerical point of view, it is important to note that we have set $\delta=\Delta t$ in the simulation, i.e.\ the delay equals the time step of the numerical approximation. 
We have observed that a larger $\delta$ does not improve the results, while a smaller $\delta$ is not acceptable in the numerical code.
Unfortunately using a smaller time step increases excessively the CPU time, especially for large flocks in 3D, for this reason we did not investigate the role of the delay further.

\revision{
Regarding the role of the parameters, we have performed a qualitative sensitive analysis for the novel key parameters regulating leaders' behavior: persistence distance $\frak d$, persistence time $\frak p$, and refractory time $\frak r$.}

\revision{Starting with the persistence distance, we observed that, tuning the value of $\frak d$ around the default value of 20 (approximately in the interval $[0,40]$), the flock moves stretching and compressing, without splitting. On the contrary, for $\frak d>40$, the flock splits into two or more groups, which could eventually merge again as a consequence of new turning phases.}

\revision{For what concerns the persistence time, for low values of $\frak p$ (approximately in the interval $[0,100]$), the flock behaves almost as in absence of leaders. As the value of $\frak p$ grows, the probability to trigger a successful turning phase increases as well.
(Note that large values of $\frak p$ are effective only if the leader is actually followed by some group mates in its turning attempt. In fact, if the leader moves away from the flock but nobody follows, it comes back to the follower status because of the other parameter $\frak d$.)} 

\revision{The sensitivity analysis for the refractory time $\frak r$ is instead more difficult, since this parameter strictly combines with the probability to become a leader (Bernoulli trial). These two parameters define, in average, the number of leaders which are active at the same time in the flock.
Numerical simulations strongly suggests that a clear, persistent change of direction of the whole flock is only possible if a critical mass of leaders is reached, i.e.\ if several leaders turn (almost) at the same time (almost) in the same direction. 
Since birds flying in the interior of the flock are subject to collisions and can change direction with difficulty, the critical mass, if any, is likely reached at the boundary of the flock. 
}

\appendix
\section{Appendix}
Here we present the proof of Proposition \ref{prop:Filippov_extension}.

 \begin{proof}
 Let us consider the intervals
 \begin{equation}
 \delta +ih \le t \le \delta + \left( i+1\right)h, \qquad \forall \ i=0,...,k-1
 \end{equation}
 where $k \ge 1$ integer, and $h=T/k$.
 We define the following approximate solution 
 
 \begin{align}\label{approx_solu}
\left\{
	\begin{array}{ll}
	y_k(t)= \phi_{0} + \displaystyle \int_{\delta}^{t} f\left(\tau, y_k(\tau-h), y_k(\tau-h-\delta) \right)\, d\tau, & \quad \delta< t <\delta+ T, \\ [3mm]
	y_k(t) \equiv \phi_{0}, & \quad t \in [-\delta,\delta].
	\end{array}
\right.
\end{align}

The functions $\{y_k\}_k$ are uniformly bounded and equicontinuous. 
In fact, from A3) we get that there exists $\varepsilon_0>0$ such that

\begin{equation}\label{unif_lim}
    \left|y_k(t)- \phi_{0} \right| \le  \displaystyle \int_{\delta}^{t} m(\tau) d\tau< \varepsilon_0.
\end{equation}

Moreover, let $\nu >0$ and $t_1 \neq t_2 $ such that $\left| t_2-t_1 \right|< \nu$.
For any $\varepsilon>0$ it holds
\begin{equation}\label{equic}
    \left|y_k(t_1)- y_k(t_2) \right| \le  \displaystyle \left|\int_{t_1}^{t_2} m(\tau) d\tau \right|  =
    \left|\varphi(t_2)-\varphi(t_1)\right|<
    \varepsilon,
\end{equation}
since $\varphi (t) \equiv \int_{\delta}^{t} m(\tau) d\tau$ is uniformly continuous.
%
%
Since \eqref{unif_lim} and \eqref{equic} hold true, Ascoli-Arzelà theorem ensures the existence of a uniformly convergent subsequence. In the following, we still denote with $\{y_k\}_k$ that subsequence, and with $y$ its limit.
By \eqref{equic}, for $h<\nu$, we get 
\begin{equation}
    \left|y_k(\tau-h)- y(\tau) \right| \le   \left|y_k(\tau-h)- y_k(\tau) \right| +
    \left|y_k(\tau)- y(\tau) \right|<
    \varepsilon,
\end{equation}
hence $y_k(\tau-h)$ converges to $y(\tau)$. In the same way, $y_k(\tau-h-\delta)$ tends to $y(\tau)$. 
Caratheodory conditions A1) and A3) allow to pass to the limit under the integral in \eqref{approx_solu}.
Hence, the limit function $y(t)$ satisfies the integral equation

\begin{equation}
y(t)=\phi_0+ \displaystyle \int_{\delta}^{t} f\left(\tau, y(\tau-h), y(\tau-h-\delta) \right)\, d\tau,
\end{equation}
which means that $y(t)$ is a solution to \eqref{model_delay}.


We now prove the uniqueness of the solution to (\ref{model_delay}).
Let $y_1, y_2 : (0, \delta+T) \rightarrow \Omega $ be solutions to problem (\ref{model_delay}), and consider $z(t)= y_1(t)-y_2(t) $. By A4) we get that for every $t \in (0,\delta+T)$,  
\begin{equation}
\left| z(t) \right| \le \displaystyle \int_{\delta}^{t} l(\tau)\left( \left| z(\tau)\right| + \left| z(\tau-\delta)\right| \right) d\tau
\le 2 \displaystyle \int_{\delta}^{t} l(\tau) \sup_{0\le s \le \tau}  \left| z(s)\right| d\tau.
\end{equation}
It follows that
\begin{equation}
\sup_{0\le s \le t}  \left| z(s)\right|  \le 2 \displaystyle \int_{\delta}^{t} l(\tau) \sup_{0\le s \le \tau}  \left| z(s)\right| d\tau.
\end{equation}
We conclude that $\displaystyle \sup_{0\le s \le t}  \left| z(s)\right| \equiv 0$ for any $t \in (\delta, \delta+T)$, hence $y_1 \equiv y_2$. 
\qed
\end{proof}

\noindent
\revision{\textbf{Authors' contribution}
EC conceived the idea, developed the model, proved the theoretical results, wrote the numerical code, wrote the manuscript, and got fundings for the research.
MM developed the model, proved the theoretical results, wrote the numerical code, carried out the numerical experiments, and wrote the manuscript.
MP proved the theoretical results.
LB developed the model.}

\begin{acknowledgements}
The authors want to warmly thank Andrea Cavagna, Irene Giardina, and Giovanna Nappo for the useful discussions we had during the preparation of the manuscript. 
\end{acknowledgements}

\noindent
\textbf{Conflict of interest}
The authors declare that they have no conflict of interest.

\bibliographystyle{plain}
\bibliography{biblio}

\end{document}